\documentclass[aps,pra,superscriptaddress,notitlepage,twocolumn]{revtex4-1}
\usepackage[utf8]{inputenc}
\usepackage{braket}
\usepackage{amsmath,amsfonts}
\usepackage{graphicx}
\usepackage{bm}
\usepackage{amsthm}
\usepackage[colorlinks]{hyperref}
\usepackage{xcolor}
\usepackage{comment}
\usepackage{mathtools}
\newtheorem{theorem}{Theorem}
\newtheorem{coro}{Corollary}
\newtheorem{lemma}{Lemma}
\newtheorem{defi}{Definition}

\newcommand{\bbC}{\mathbb{C}}
\newcommand{\ba}{\begin{eqnarray}}
\newcommand{\ea}{\end{eqnarray}}
\newcommand{\ban}{\begin{eqnarray*}}
\newcommand{\ean}{\end{eqnarray*}}
\makeatletter
\newcommand{\RN}[1]{\expandafter\@slowromancap\romannumeral #1@}
\makeatother

\newcommand{\adam}[1]{\textcolor{gray}{#1}}

\usepackage{soul}

\newcommand{\specialcell}[2][c]{%
  \begin{tabular}[#1]{@{}c@{}}#2\end{tabular}}

\begin{document}
\title{Absolutely entangled sets of pure states for bipartitions and multipartitions}

\author{Baichu Yu}
\affiliation{Centre for Quantum Technologies, National University of Singapore, 3 Science Drive 2, Singapore 117543}

\author{Pooja Jayachandran}
\affiliation{Centre for Quantum Technologies, National University of Singapore, 3 Science Drive 2, Singapore 117543}

\author{Adam Burchardt}
\affiliation{Jagiellonian University, Marian Smoluchowski Institute for Physics, \L ojasiewicza 11, 30-348 Krak\'{o}w, Poland}

\author{Yu Cai}
\affiliation{Department of Applied Physics, University of Geneva, Geneva, Switzerland}


\author{Nicolas Brunner}
\affiliation{Department of Applied Physics, University of Geneva, Geneva, Switzerland}

\author{Valerio Scarani}
\affiliation{Centre for Quantum Technologies, National University of Singapore, 3 Science Drive 2, Singapore 117543}
\affiliation{Department of Physics, National University of Singapore, 2 Science Drive 3, Singapore 117542}


\begin{abstract}
A set of quantum states is said to be absolutely entangled, when at least one state in the set remains entangled for any definition of subsystems, i.e. for any choice of the global reference frame. In this work we investigate the properties of absolutey entangled sets (AES) of pure quantum states. For the case of a two-qubit system, we present a sufficient condition to detect an AES, and use it to construct families of $N$ states such that $N-3$ (the maximal possible number) remain entangled for any definition of subsystems. For a general bipartition $d=d_1d_2$, we prove that sets of $N>\left\lfloor{(d_{1}+1)(d_{2}+1)/2}\right \rfloor$ states are AES with Haar measure 1. Then, we define AES for multipartitions. We derive a general lower bound on the number of states in an AES for a given multipartition, and also construct explicit examples. In particular, we exhibit an AES with respect to any possible multi-partitioning of the total system.
\end{abstract}

\maketitle

\section{Introduction}

The notion of entanglement relies on partitioning a larger system into subsystems. The definition of subsystems, however, usually relies on operational convenience \cite{zanardi2001, zanardi2004, harshman2011}. For example, a very common scenario to display entanglement considers that a general system is spatially separated into two distant parts, held by Alice and Bob respectively. In this case it is natural to take the two local systems as the subsystems, since general joint operations are not allowed.  However, there are cases that lack such a natural choice of partition. The textbook treatment of the hydrogen atom is a case in point: if we choose the proton and electron as subsystems, then there is entanglement, since the proton can be delocalised, while the electron is next to it. However, if we choose the subsystems to be the centre of mass (CM) and relative coordinate, the system is in a separable form. Hence, a system entangled with respect to a certain definition of subsystems and corresponding local operations can be made separable by redefining the subsystems. And if multiple such definitions exist, the meaning of entanglement is blurred. 

More formally, the definition of subsystems relies on the choice of a reference frame. A change of reference frame corresponds to applying a unitary operation on the total system. It is then clear that any single quantum state $\ket{\phi}\in\bbC^{d}$ can always be transformed into product state by a global unitary matrix $U\in L^{d}$, where $L^{d}$ is the space of linear operators on $\bbC^{d}$.  

Recently a twist on this problem was proposed by some of the authors here independently~\cite{lovitz2019decomposable,cai2020entanglement}, in different mathematical approach but with similar physical meaning. The main idea consists in considering sets of quantum states. There exist indeed sets of states, such that no unitary maps simultaneously all states in the set onto separable ones. In \cite{cai2020entanglement}, such sets are named as ``absolutely entangled sets'' (AES), and therefore always feature (at least) one entangled state, in all possible reference frames, i.e. for any possible definition of subsystems. 

The existence of AES can be intuitively predicted. Surely there is no unitary that simultaneously maps all states $\ket{\phi}\in\bbC^{d}$ to product ones. Examples of finite AES are provided by \textit{projective $2$-designs} on $\bbC^{d}$ with $d=d_1d_2$, which are sets of states that faithfully approximate the state space for any degree $2$ polynomial in the state coefficients \cite{Welch,klappenecker2005mutually}. Indeed, denoting by $\rho$ the reduced density matrix of the first subsystem, the average value of the purity $\left\langle  \text{Tr} \rho^2 \right\rangle$ evaluated over all vectors which form a $2$-design takes the same value as evaluated over all quantum states distributed according to the Haar measure. The latter is known to be $\left\langle  \text{Tr} \rho^2 \right\rangle=(d_1+d_2)/(d_1d_2+1)$ \cite{Lubkin_1978, Zyczkowski_2001}, different from one: thus, some states of the $2$-design must be entangled. A particularly famous example of such design is constituted by all vectors from the complete set of \textit{Mutually Unbiased Basis} (MUBs) \cite{Wooters_1989,WOOTTERS19871}. Two bases are said to be \textit{unbiased} if for a system prepared in an eigenstate of one basis, all outcomes of the measurement with respect to the other bases are predicted to occur with equal probability \cite{Bengtsson_2007}. 
Construction of a complete set of MUBs is known for any prime power dimension $d=p^s$ \cite{MUBalg,MUBs25} and it is also known that at most $d(p+1)$ vectors in such a set are separable \cite{LocalMubs,Wooters_1989,Czartowski_2018}. Therefore, any subset of $d (p+1)+1$ vectors from the complete set of MUBs in dimension $d=p^s$ forms an AES.

A natural question is then to find AES featuring only few states, and even minimal sets. This question was discussed in Ref.~\cite{lovitz2019decomposable,cai2020entanglement}. 
First, a general lower bound on the size of an AES was derived: for the case of pure states in $\bbC^{d}=\bbC^{d_1}\otimes\bbC^{d_2}$, one needs at least $\max{(d_1,d_2)}+2$ states~\cite{lovitz2019decomposable,cai2020entanglement}. Second, an explicit construction of AES featuring $d_1+d_2$ states was presented (this construction describes therefore a minimal set for $\min(d_1,d_2)=2$). Tools to quantify the amount of entanglement in an AES were also presented~\cite{cai2020entanglement}. In a subsequent paper, Li and Yung provided more examples of AES for $\bbC^{d}=\bbC^{d_1}\otimes\bbC^{d_2}$ case with $d$ and $d+1$ states \cite{li2020absolutely}. 

In this paper we present a broader exploration of the properties of AES of pure quantum states. After a quick review of previously known results, Section \ref{sec:bipartite} presents new results for bipartitions $\bbC^{d}=\bbC^{d_1}\otimes\bbC^{d_2}$. First, we give a sufficient condition for a set of four linearly two-qubit independent states to be AES (Theorem \ref{thm1}). This theorem becomes a tool to construct sets of $N$ states, $N-3$ of which remain entangled for any $U$ (Theorem \ref{thm2}). We also estimate the fraction of AES for sets drawn with the Haar measure, proving in particular that almost all sets are AES if one takes sufficiently many states (Theorem \ref{thm3new}). In section \ref{sec:multi}, we move beyond the bipartite problem, and give a definition of an AES for \textit{multipartitions}, i.e. $\bbC^{d}=\bbC^{d_1}\otimes\bbC^{d_2}\otimes...\otimes\bbC^{d_k}$. We derive a lower bound on the size of an AES given by $\max{(d_1,d_2,...,d_{k})}+2$ (Theorem \ref{thm3text}). We also construct an AES of $(d_1+d_2+...+d_{k}-k+2)$ states with respect to multipartitions $\bbC^{d}=\bbC^{d_1}\otimes\bbC^{d_2}\otimes...\otimes\bbC^{d_k}$ (Theorem \ref{thm4}). As a corollary follows the construction of a set that is AES with respect to every multi-partition of $\bbC^{d}$. Section \ref{sec:concl} sketches some open questions and future directions.

\section{Properties of bipartite AES}
\label{sec:bipartite}

We recall the notion of an absolutely entangled set (AES) of states for bipartitions of the Hilbert space proposed in \cite{cai2020entanglement}:
\begin{defi}
Consider a set of quantum states $\{\rho_1,...,\rho_K\}$ in a fixed Hilbert space $\bbC^d$ of non-prime dimension. The set is said to be absolutely entangled with respect to bipartitions into subsystems of fixed dimension $(d_1,d_2)$, if for every unitary $U \in SU(d)$, at least one state $U\rho_kU^\dagger$ is entangled with respect to $\bbC^{d_1} \otimes \bbC^{d_2}$.
\end{defi}

In \cite{cai2020entanglement}, it was shown that an arbitrary set of pure states would require a minimum of $N_{min}(d_1,d_2)=\max(d_1,d_2)+2$ states to form an AES: that is, for all sets of states less than $N_{min}$, there exists a bipartition in which all states are product. Several examples of AES were also presented. For instance, the set of $d_1+d_2$ states
\begin{align}\label{specialset}
\begin{split}
    \ket{\phi_1} =& \ket{\xi_1}, \\
    \ket{\phi_k} =& c\ket{\xi_{1}}+\sqrt{1-c^2}\ket{\xi_{k}},\;k=2,...,d_1+d_2\,,
    \end{split}
\end{align}
where $\langle \xi_{j}|\xi_k\rangle=\delta_{jk}$, is absolutely entangled for any value $c\in\big(\sqrt{\frac{(d_1-1)(d_2-1)}{d_1d_2}},1\big)$.

In this section we list additional results on AES for bipartitons. We focus on the simplest case of $d=4$ i.e.~$d_1=d_2=2$. There seems to be no obstacle of principle in extending similar analyses to larger dimensions, but the expressions are already rather cumbersome.

\subsection{Sufficient condition for AES of linearly independent states.}

In our previous work, we showed specific examples of AES involving four states in $d=4$. All those examples used linearly independent states. Here we present an extension of that proof, that provides a sufficient condition for an arbitrary set of linearly independent states to be AES.

\begin{theorem}\label{thm1}
Consider an orthonormal basis $\{\ket{\xi_i}\}_{i=1}^4$ of $d=4$ ($d_1=d_2=2$) dimensional Hilbert space $\mathbb{C}^4$ and four linearly independent states
\begin{align}\label{eqth11}
    \begin{split}
    \ket{\phi_1} &= \ket{\xi_1},\\
    \ket{\phi_2} &= c_{21}\ket{\xi_1} + c_{22}\ket{\xi_2},\\
    \ket{\phi_3} &= c_{31}\ket{\xi_1} + c_{32}\ket{\xi_2}+c_{33}\ket{\xi_3},\\
    \ket{\phi_4} &= c_{41}\ket{\xi_1} + c_{42}\ket{\xi_2}+c_{43}\ket{\xi_3}+c_{44}\ket{\xi_4}.
     \end{split}
\end{align}
The set of states $\{\ket{\phi_i}\}_{i=1}^4$ is absolutely entangled if 
\begin{align}
    c\adam{:= }\min_{i=2,3,4}|c_{i1}|>1-\frac{2}{L+1}\label{eqcriterion1}
    \adam{,}
\end{align} 
where $L$ is is a positive number presented in terms of coefficients $|c_{ij}|$ in \eqref{eqL}.
\end{theorem}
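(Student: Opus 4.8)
\emph{Proof idea.}
The plan is to recast ``absolutely entangled'' as the non-existence of a matrix, and then show that \eqref{eqcriterion1} excludes that matrix. Recall that a two-qubit pure state $\ket{\psi}$, written in the computational basis of the fixed factorisation $\bbC^2\otimes\bbC^2$, is separable iff the determinant of its $2\times2$ coefficient matrix $M$ vanishes, equivalently iff $\psi^{T}S_0\,\psi=0$ for the symmetric unitary $S_0=\sigma_y\otimes\sigma_y$ (indeed $\psi^{T}S_0\psi=-2\det M$). Changing the global reference frame by $U$ replaces $S_0$ with $U^{T}S_0U$, and as $U$ varies over the unitary group the Autonne--Takagi factorisation shows $U^{T}S_0U$ exhausts \emph{all} symmetric unitary matrices. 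Since $\psi^T(\cdot)\psi=0$ is scale-invariant, it follows that $\{\ket{\phi_i}\}_{i=1}^{4}$ \emph{fails} to be AES iff there is a symmetric unitary $S=(S_{jk})$ with $\phi_i^{T}S\,\phi_i=0$ for $i=1,\dots,4$, where $\phi_i$ is the coordinate vector of $\ket{\phi_i}$ in the $\{\ket{\xi_i}\}$ basis (the basis is immaterial: changing it conjugates $S$ by a fixed unitary and keeps it symmetric unitary). So I would assume such an $S$ exists and aim for a contradiction.

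The staircase form \eqref{eqth11} lets one peel the four equations. From $\ket{\phi_1}=\ket{\xi_1}$ one reads off $S_{11}=0$. Substituting this and using linear independence ($c_{i1}\neq0$), the equation for $\ket{\phi_i}$ reduces to
\[
\sum_{k\ge 2}c_{ik}\,S_{1k}\;=\;-\,\frac{1}{2c_{i1}}\sum_{j,k\ge 2}c_{ij}c_{ik}\,S_{jk},\qquad i=2,3,4.
\]
Write $r=(S_{12},S_{13},S_{14})$ for the truncated first row of $S$ and $S'$ for its symmetric bottom-right $3\times3$ block. Unitarity of $S$ forces $\|r\|=1$ and $(S')^{\dagger}S'=I-\bar r^{T}r$, so $S'$ is a contraction (a rank-two partial isometry with kernel $\bbC\bar r^{T}$). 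The three displayed relations bound the combinations $\sum_k c_{ik}S_{1k}$ of the entries of $r$ by $\tfrac{1}{2|c_{i1}|}\le\tfrac{1}{2c}$ times the ``core'' quadratics $\sum_{j,k\ge 2}c_{ij}c_{ik}S_{jk}$, while the row-norm identities $\sum_k|S_{jk}|^2=1$ and $\|r\|=1$ cap how large the diagonal entries $S_{jj}$ can be relative to the off-diagonal ones. Chaining Cauchy--Schwarz through this system produces an inequality of the form $c\le 1-\tfrac{2}{L+1}$ with $L=L(|c_{ij}|)$ as in \eqref{eqL}; the contrapositive is the theorem. Specialised to the decoupled family \eqref{specialset}, where each $\ket{\phi_i}\in\mathrm{span}\{\ket{\xi_1},\ket{\xi_i}\}$ so the core quadratics collapse to single terms $S_{ii}=-2(c_{i1}/c_{ii})S_{1i}$, summing the resulting bounds on $|S_{1i}|^2$ over $i$ gives a finite $L$ and recovers the absolute entanglement of that set; Theorem~\ref{thm1} extends this to arbitrary staircase coefficients (it gives a sufficient, not sharp, threshold).

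The main obstacle is exactly the part that goes beyond \eqref{specialset}: for $\ket{\phi_3}$ and $\ket{\phi_4}$ the core quadratics carry genuine cross terms $S_{23},S_{24},S_{34}$, so one cannot read $|S_{jj}|$ off $|S_{1j}|$ and the three equations must be handled simultaneously --- e.g.\ by expanding $r$ and $S'$ against the basis $\{\psi_2,\psi_3,\psi_4\}$ of $\bbC^3$ obtained by truncating the $\ket{\phi_i}$ to $\mathrm{span}\{\ket{\xi_2},\ket{\xi_3},\ket{\xi_4}\}$ (a basis, since $c_{ii}\neq0$), and also using the symmetry of $S'$ together with $(S')^{\dagger}S'=I-\bar r^{T}r$. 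Distilling from this coupled system a single, manifestly positive constant $L(|c_{ij}|)$, and checking that it remains finite whenever the $\ket{\phi_i}$ are linearly independent, so that $1-\tfrac{2}{L+1}<1$ and the criterion is not vacuous, is the crux; once $L$ is pinned down the final contradiction is a one-line estimate. A minor point to verify is that the Autonne--Takagi step really realises all symmetric unitaries and only those, so that no separating frame is overlooked or spuriously introduced.
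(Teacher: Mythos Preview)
Your reformulation is correct and rather elegant: the Autonne--Takagi step does produce exactly the symmetric unitaries, so ``not AES'' is equivalent to the existence of a symmetric unitary $S$ with $\phi_i^{T}S\phi_i=0$, and the staircase form then gives $S_{11}=0$ together with the three relations you wrote. There is even a tight correspondence with the paper's variables: if one picks a representative $U$ with $U\ket{\xi_1}=\ket{11}$ and writes $U\ket{\xi_i}=b_{i2}\ket{12}+b_{i3}\ket{21}+b_{i4}\ket{22}$ for $i\ge 2$, a direct computation shows $S_{1k}=-b_{k4}$, so your row $r=(S_{12},S_{13},S_{14})$ is (up to sign) the last column of the $3\times3$ unitary $U_B=(b_{ij})$, and your constraint $\|r\|=1$ \emph{is} the paper's final normalisation $|b_{24}|^2+|b_{34}|^2+|b_{44}|^2=1$.

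The gap is in the step you yourself flag as ``the crux''. The theorem fixes a specific $L$ (equation \eqref{eqL}), and the paper obtains it not via Cauchy--Schwarz on the entries of $S$ but by working with the representative $U$. Concretely, writing $U\ket{\phi_i}=\sum_j d_{ij}\ket{j}$, the product condition is $d_{i1}d_{i4}=d_{i2}d_{i3}$; together with $\sum_j|d_{ij}|^2=1$ and $d_{i1}=c_{i1}$ this yields, via AM--GM on $|d_{i2}|,|d_{i3}|$, the sharp bound $|d_{i4}|\le 1-|c_{i1}|$, hence $|b_{i4}|\le S_i:=\sqrt{(1-|c_{i1}|)/(1+|c_{i1}|)}$. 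Propagating these through the triangular relations $d_{ij}=\sum_{k}c_{ik}b_{kj}$ bounds $|b_{24}|,|b_{34}|,|b_{44}|$ by $S_{\max}$ times the explicit factors that assemble into $L$; the column normalisation then forces $S_{\max}^{2}L\ge1$, i.e.\ $c\le 1-\tfrac{2}{L+1}$. The intermediate quantities $d_{i2},d_{i3}$ are \emph{not} functions of $S$ alone (they depend on the choice of $U$ in the fibre of $U\mapsto U^{T}S_0U$), and that is precisely where the AM--GM gain lives. If you stay purely at the level of $S$ and use only row-norms of $S$ --- e.g.\ for $i=2$ one gets $|S_{12}|^2\le (1-|c_{21}|^2)/(1+3|c_{21}|^2)$ from $|S_{12}|^2+|S_{22}|^2\le1$ and $S_{22}=-\tfrac{2c_{21}}{c_{22}}S_{12}$ --- you obtain a strictly weaker bound than the paper's $|S_{12}|^2=|b_{24}|^2\le(1-|c_{21}|)/(1+|c_{21}|)$, hence a larger $L$ and a different (weaker) theorem. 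So to prove the statement with the stated $L$, you will have to reinject the $U$-level information (equivalently, exploit the full rank-two partial-isometry structure of $S'$ and its symmetry simultaneously), at which point the argument essentially becomes the paper's.
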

\noindent
Notice that values $c$ and $L$ depend on the ordering of the states $\ket{\phi_i}$, while AES property is independent of such ordering. It is enough that one permutation of the states fulfills \eqref{eqcriterion1} to guarantee that the set is AES.

\textit{Proof.} Suppose there exists a global unitary matrix $U$ that takes all four states $\ket{\phi_i}$ into product states simultaneously. Then, up to further local unitary transformations, we can always assume
\begin{align}\label{eqth13}
\begin{split}
    U\ket{\phi_1}&=\ket{11},\\
    U\ket{\phi_2}&=d_{21}\ket{11}+d_{22}\ket{12}+d_{23}\ket{21}+d_{24}\ket{22},\\
    U\ket{\phi_3}&=d_{31}\ket{11}+d_{32}\ket{12}+d_{33}\ket{21}+d_{34}\ket{22},\\
    U\ket{\phi_4}&=d_{41}\ket{11}+d_{42}\ket{12}+d_{43}\ket{21}+d_{44}\ket{22}.
\end{split}
\end{align}
We shall analyze how the matrix $U$ transforms basis vectors $\ket{\xi_{1}}$. 
Note that $U\ket{\xi_{1}}=\ket{11}$, whence the action on the other basis vectors reads
\begin{align}\label{eqth12}
\begin{split}
    U|\xi_{2}\rangle=b_{22}|12\rangle+b_{23}|21\rangle+b_{24}|22\rangle,\\
U|\xi_{3}\rangle=b_{32}|12\rangle+b_{33}|21\rangle+b_{34}|22\rangle,\\
U|\xi_{4}\rangle=b_{42}|12\rangle+b_{43}|21\rangle+b_{44}|22\rangle,
\end{split}
\end{align}
where
\begin{align}
U_{B}=\left(
 \begin{matrix}
   b_{22} & b_{23} & b_{24} \\
   b_{32} & b_{33} & b_{34} \\
   b_{42} & b_{43} & b_{44}
  \end{matrix}
  \right)
  \end{align}
is a unitary matrix. By identification of the coefficients of \eqref{eqth11}, \eqref{eqth13} and \eqref{eqth12}, one gets $d_{i1}=c_{i1}$ and 
\begin{align}\label{eqds}
    \begin{split}
    d_{2j}&=c_{22}\cdot b_{2j},\\
    d_{3j}&=c_{32}\cdot b_{2j}+c_{33}\cdot b_{3j},\\
    d_{4j}&=c_{42}\cdot b_{2j}+c_{43}\cdot b_{3j}+c_{44}\cdot b_{4j}
    \end{split}
    \end{align}
for $j=2,3,4$. Observe that three states $\{U\ket{\phi_i}\}_{i=2}^4$  in \eqref{eqth13} are product if and only if $d_{i1}d_{i4}=d_{i2}d_{i3}$. By imposing this condition together with the normalisation constraints on vectors $U\ket{\psi_i}$, it follows that the largest value that $|d_{i4}|$ can take (denoted $|d_{i4}|_{max}$) is attained when $|d_{i2}|=|d_{i3}|$. This results in the expression
\begin{equation}
|d_{i1}||d_{i4}|_{max}=(1-|d_{i1}|^2-|d_{i4}|_{max}^{2})/2,
\end{equation}
hence we have
\begin{equation}\label{eqdi4}
|d_{i4}|_{max}=1-|c_{i1}|,
\end{equation}
where for simplicity we use the notation $|d_{i1}|=|c_{i1}|$. Let
    \begin{equation}\label{defSi}
 S_i \,:=\,\sqrt{\frac{1-|c_{i1}|}{1+|c_{i1}|}}=\sqrt{\frac{2}{|c_{i1}|+1}-1},
    \end{equation}
\noindent 
combining \eqref{eqds} with \eqref{eqdi4} yields
 \ban
    |b_{24}|&\leq& S_{2}\,\leq S_{max},\\
    |b_{34}|&\leq&  \frac{S_{2}\,|c_{32}|+S_{3}\,\sqrt{1-|c_{31}|^2}}{|c_{33}|}\\
    &\leq& S_{max}
    \,\frac{|c_{32}|+\sqrt{1-|c_{31}|^2}}{|c_{33}|},\\
    |b_{44}|&\leq& S_{max} \big(|c_{42}|+|c_{43}|\frac{|c_{32}|+\sqrt{1-|c_{31}|^2}}{|c_{33}|}\\&&+\sqrt{1-|c_{41}|^2}\big)/|c_{44}|,
\ean where $S_{max}=\textrm{max}_{i}\{S_{i}\}=\sqrt{\frac{2}{c+1}-1}$. The first of above inequalities implies that $c_{22}\neq 0$. Note that this requires the linear independence of vectors $\ket{\phi_1}$ and $\ket{\phi_2}$. Finally, the normalization constraint: $|b_{24}|^2+|b_{34}|^2+|b_{44}|^2=1$, implies that a necessary condition for all states being product is $S_{max}^{2}\geq 1/L$ i.e.~$c\leq 1-\frac{2}{L+1}$, where 
\begin{align}
     \label{eqL}
  L=&1+\left(r_{32}+\sqrt{r_{32}^2+1} \right)^2+\\
  \nonumber
 &\left(r_{42}+r_{43}\,\Big(r_{32}+\sqrt{r_{32}^2+1}\Big)+\sqrt{1+r_{42}^2+r_{43}^2}\right)^2,
 \end{align}
and $r_{ij}=|c_{ij}|/|c_{ii}|$. Thus, if \eqref{eqcriterion1} holds, the set of states is an AES. \qed

Notice that the expression of $L$ depends only on the coefficients of $\ket{\phi_3}$ and $\ket{\phi_4}$; as for $\ket{\phi_2}$, we mentioned in the proof that we need $c_{22}\neq 0$. Since $L=\infty$ when either $c_{33}$ or $c_{44}$ is zero, linear-independence is required for the bound \eqref{eqcriterion1} to be non-trivial. Also, given a set of linearly independent states that is not AES, one can generate a AES set by playing with the scaling coefficients $r_{ij}$ (the normalisation can be compensated with $c_{i1}$, that does not enter the expression).

\subsection{Sets of linearly dependent states}

So far we have discussed AES that span the whole space. Sets of linearly-dependent states are of measure zero among the sets with the same number of states; nonetheless, one can ask whether four states that do not span the whole space can be an AES for $d=4$.

Obviously, sets of $n$ states that span a $m$ dimensional subspace cannot be absolutely entangled if $m< \max(d_1,d_2)$. Indeed, a global unitary matrix can map the spanned subspace onto the subspaces spanned either by vectors $\ket{j}\otimes\ket{1}$ (if $d_1\geq d_2$) or by vectors $\ket{1}\otimes\ket{j}$ (if $d_2\geq d_1$). Therefore, any AES in dimension $d=4$ spans at least a $m=3$ dimensional subspace. On generating random sets of four linearly dependent states of the form,
\begin{align}
    \begin{split}
    \ket{\phi_1} &= \ket{\xi_1},\\
    \ket{\phi_2} &= a_1\ket{\xi_1} + a_2\ket{\xi_2},\\
    \ket{\phi_3} &= b_1\ket{\xi_1}+b_2\ket{\xi_2} +b_3\ket{\xi_3},\\
    \ket{\phi_4} &=c_1\ket{\xi_1}+c_2\ket{\xi_2} + c_3\ket{\xi_3},\\
     \end{split}
\end{align}
and running a heuristic optimisation using Matlab's \texttt{fminunc} function on them, we find that some sets attain a total entanglement entropy of the order of $10^{-1}$ \footnote{For example, we get a total entropy 0.13 when $a_1=0.2922 - 0.0351i$, $a_2=-0.7764 + 0.5573i$, $b_1=-0.0595 + 0.4964i$, $b_2=0.5150 + 0.2846i$, $b_3=-0.6334 - 0.0518i$, $c_1=0.6996 + 0.1303i$, $c_2=0.0494 + 0.0451i$, $c_3= -0.2643 - 0.6475i$}. Even though this is not analytic proof, it provides some evidence that there exist sets of four linearly dependent states that are AES.

\subsection{AES with maximal number of entangled states.}

The definition of AES requires that \textit{at least one} state remains entangled in any bipartition. But \textit{how many} states can remain entangled in all bipartitions? Since any set of $\max(d_1,d_2)+1$ states can be transformed into the separable states by a global unitary matrix, it follows that in a set of $N$ states, at most $N-[\max(d_1,d_2)+1]$ remains entangled with respect to any bipartition. We show that this bound is tight for sets in dimension $d=4$, i.e. $d_1=d_2=2$.

\begin{theorem}\label{thm2}
For any number $N$, there exist sets of $N$ two-qubit states such that for any bipartition, at least $N-3$ of them remains entangled.
\end{theorem}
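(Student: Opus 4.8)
The plan is to build such a set by starting from a ``core'' of four linearly independent states that is already known (or proven via Theorem~\ref{thm1}) to be an AES, and then adjoining $N-4$ extra states that are forced to be entangled whenever the core four are brought to product form. The point is that once a global unitary $U$ is fixed, either it fails to map all of the core four to product states --- in which case at least one of those is entangled and we already have more than $N-3$ entangled states trivially, so in fact we only need to worry about $U$'s that \emph{do} trivialize the core --- or $U$ maps the core to product states, and then $U$ is essentially pinned down (up to local unitaries and the residual freedom analyzed in the proof of Theorem~\ref{thm1}), so we can choose the additional states generically inside $\mathbb{C}^4$ to avoid the (measure-zero) set of product vectors in all of the finitely many --- or low-dimensional families of --- such frames.

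Concretely, first I would fix a core AES $\{\ket{\phi_1},\dots,\ket{\phi_4}\}$ of the triangular form \eqref{eqth11} satisfying the sufficient condition \eqref{eqcriterion1}, so that by Theorem~\ref{thm1} no $U$ sends all four to product states. Second, I would analyze the set $\mathcal{U}_0$ of unitaries $U$ such that $U\ket{\phi_i}$ is product for $i=1,2,3$ only (dropping $\ket{\phi_4}$): following the bookkeeping in the proof of Theorem~\ref{thm1}, imposing $U\ket{\xi_1}=\ket{11}$ and the product conditions on the images of $\ket{\phi_2},\ket{\phi_3}$ constrains the unitary $U_B$ acting on $\mathrm{span}\{\ket{\xi_2},\ket{\xi_3},\ket{\xi_4}\}$ to a low-dimensional family (its last column is bounded and the product conditions are two complex equations), so modulo local unitaries on the two qubits, $\mathcal{U}_0$ is a manifold of dimension strictly less than that of $SU(4)$. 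Third, for each such $U\in\mathcal{U}_0$ the image $U\ket{\psi}$ of a \emph{generic} vector $\ket{\psi}\in\mathbb{C}^4$ is entangled, because the product vectors form a measure-zero (real codimension $\geq 2$) subset; since $\mathcal{U}_0$ (modulo local unitaries, which don't affect entanglement) is itself low-dimensional, the set of $\ket{\psi}$ that are product in \emph{some} frame of $\mathcal{U}_0$ is still measure zero in $\mathbb{C}^4$. Hence I can pick $\ket{\phi_5},\dots,\ket{\phi_N}$ one at a time, each avoiding this bad set (and the finitely many analogous bad sets coming from dropping a different one of the first four, or from over-determined frame families), so that each $\ket{\phi_m}$, $m\geq 5$, is entangled for every $U$ that trivializes any three of the original four.

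Finally I would assemble the count. Given any $U\in SU(4)$: if $U$ does not map all of $\ket{\phi_1},\dots,\ket{\phi_4}$ to product states then at least one of them is entangled; more to the point, $U$ can map at most three of the states $\ket{\phi_1},\ket{\phi_2},\ket{\phi_3},\ket{\phi_4},\ket{\phi_5},\dots$ to product states at once --- indeed if it trivialized four of them, either those four include three of the core (forcing, by construction, every adjoined $\ket{\phi_m}$ to be entangled, contradiction with trivializing a fourth that is adjoined) or it would trivialize all of the core itself (contradicting Theorem~\ref{thm1}). Either way, at most three states in the $N$-element set are product under $U$, so at least $N-3$ are entangled, for every $U$. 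The main obstacle I anticipate is the dimension-counting in the second and third steps: I need to be careful that the family of ``partially trivializing'' unitaries $\mathcal{U}_0$, after quotienting by the local-unitary redundancy, genuinely has positive codimension in the relevant sense, so that its orbit of product vectors is still negligible; degenerate alignments among the $\ket{\phi_i}$ (e.g. special values of the $c_{ij}$) could enlarge $\mathcal{U}_0$ and must be excluded when choosing the core. Once that is controlled, the genericity argument for choosing the extra states is routine.
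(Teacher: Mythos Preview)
Your genericity argument in step three has a genuine gap. You claim that because $\mathcal{U}_0$ (the unitaries trivializing three of the core states) is ``low-dimensional'' modulo local unitaries, the set of vectors $\ket{\psi}$ that become product in \emph{some} frame of $\mathcal{U}_0$ has measure zero. A dimension count shows the opposite. Modulo local unitaries, $SU(4)$ has $(d_1^2-1)(d_2^2-1)=9$ real parameters, and each product condition imposes $2$ real constraints; trivializing three fixed states therefore leaves a $3$-dimensional family $\mathcal{U}_0/\text{local}$. The product manifold in $\mathbb{CP}^3$ has real dimension $4$, so the ``bad set'' $\bigcup_{U\in\mathcal{U}_0} U^{-1}(\text{product})$ is the image of a $(3+4)$-dimensional space into the $6$-dimensional $\mathbb{CP}^3$: generically this is everything, not measure zero. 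This is precisely the content of Theorem~\ref{thm3new} for $\mathcal{N}(2,2)=4$ and is confirmed by the numerics $V_4(2,2)\lesssim 0.085$: a generic fourth state \emph{can} be trivialized together with three given generic ones. So you cannot choose $\ket{\phi_5}$ ``generically'' and expect it to be entangled for every $U\in\mathcal{U}_0$. Your final paragraph also contains a false dichotomy: you only treat the cases where the four simultaneously trivialized states include three or four core states, and say nothing about a $U$ that trivializes two (or fewer) core states together with two (or more) of the adjoined ones.

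The paper's proof avoids both issues by arranging that \emph{every} four-element subset of the $N$ states is an AES, not just subsets anchored on a fixed core. This is done geometrically: one places $N$ points on the unit sphere in $\mathbb{R}^3$ so that any four are in general position (span a tetrahedron), which is easy since a sphere is not a finite union of circles (Lemma~\ref{lemma1}); then one sets $\ket{\phi_i}=a\ket{\xi_0}+\sqrt{1-a^2}\sum_j v_{ij}\ket{\xi_j}$. For any chosen four-element subset one rewrites the states in the triangular form \eqref{eqth11} and shows that, as $a\to 1$, the quantity $L$ converges to a finite $L'$ determined by the tetrahedron (finite because the edge vectors are linearly independent), while $\min_i|c_{i0}|\to 1$, so criterion \eqref{eqcriterion1} is eventually satisfied. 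Taking $a$ larger than the worst of the finitely many thresholds over all $\binom{N}{4}$ subsets gives the set. Once every four-subset is an AES, no $U$ can trivialize any four of the $N$ states, and the bound $N-3$ is immediate.
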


We shall construct a set of $N$ states with the property that any of its four-element subsets constitutes AES. Thus, if one chooses any unitary matrix which transforms given three states to separable states, all other states remain entangled. 

We refer to Appendix \ref{appthm2} for the general construction of such set of states, here we show the example of $N=5$ states:
\begin{align}
    \begin{split}
    \ket{\phi_1} &= \ket{\xi_1},\\
    \ket{\phi_2} &= b\ket{\xi_1} + \sqrt{1-b^2}\ket{\xi_2},\\
    \ket{\phi_3} &= b\ket{\xi_1} + \sqrt{1-b^2}\ket{\xi_3},\\
    \ket{\phi_4} &= b\ket{\xi_1} + \sqrt{1-b^2}\ket{\xi_4},\\
    \ket{\phi_5} &= b\ket{\xi_1} +\sqrt{\frac{1-b^2}{3}}\big(\ket{\xi_2}+\ket{\xi_3}+\ket{\xi_4}\big)\,.
     \end{split}
\end{align} 
By symmetry of the above system of states, it is enough to check that the following three four-element subsets with subscript indices: $\{1,2,3,4\}$, $\{1,2,3,5\}$, $\{2,3,4,5\}$ form AES. We use the criterion presented in Theorem \ref{thm1}, checking it for all the permutations within those subsets. The critical values of $b$ for the three subsets are 0.5, 0.820, 0.762 respectively. Thus, for any value $b>0.820$, each four-element subset of the set $\{\ket{\phi_i} \}_{i=1}^5$ is an AES.

\subsection{Relative volume of AES.}

The next question we are going to address is that of typicality of AES among the set of sets of states. We are going to adopt the following:

\begin{defi} Consider the bipartition $\mathbb{C}^{d}=\mathbb{C}^{d_1}\otimes\mathbb{C}^{d_2}$. Let $V_{N,AES}(d_1,d_2)$ and $V_{N,TOT}(d)$ be the volume of AES and the total volume, respectively, of $N$-state sets in the space $[\bbC^{d}] ^{\otimes N}$, defined with the Haar measure. Then the relative volume of AES is defined as $V_N(d_1,d_2):=\frac{V_{N,AES}(d_1,d_2)}{V_{N,TOT}(d)}$.
\end{defi}
A straightforward but noteworthy observation is that $V_N$ increases with $N$. Indeed, the relative volume with a larger $N$ is at least as large as that with a smaller one, since if a set of $N_1$ states forms an AES, then any set of $N_2>N_1$ states containing these $N_1$ states would also form an AES. This accounts for the ease of generation of AES with more states. Another interesting observation is that $V_N>0$ if at least one set of $N$ states is known to be AES. This is because that if there exists an AES of $N$ states $\{\ket{\phi_{i}}\}$, one can always show that all $N$-state sets $\{\ket{\psi_{i}}\}$, where each state $\ket{\psi_{i}}$ is in a certain neighbourhood of the state $\ket{\phi_{i}}$, also form AES.  Moreover, we have the following


\begin{theorem}\label{thm3new}
For a bipartition $\mathbb{C}^{d}=\mathbb{C}^{d_1}\otimes\mathbb{C}^{d_2}$, the relative volume $V_N(d_1,d_2)=1$ for any $N> \mathcal{N}(d_1,d_2)$, where
\[
\mathcal{N}(d_1,d_2):=\left\lfloor{\frac{(d_{1}+1)(d_{2}+1)}{2}}\right \rfloor.
\] 
In other words, in a generic set of $N$-states, at most $\mathcal{N}(d_1,d_2)$ states may be simultaneously transformed into product states by a global unitary matrix. Sets of $N$-states violating this property are of zero measure among all set of $N$-states distributed according to the Haar measure.
\end{theorem}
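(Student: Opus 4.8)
The plan is to reformulate the statement as a dimension count on algebraic varieties. Fix $N$ states $\ket{\phi_1},\dots,\ket{\phi_N}$ drawn Haar-randomly in $\bbC^d$ with $d=d_1d_2$. The ``bad'' event is that there exists $U\in SU(d)$ such that all $U\ket{\phi_i}$ are product vectors in $\bbC^{d_1}\otimes\bbC^{d_2}$. Equivalently, writing $\ket{\psi_i}=U\ket{\phi_i}$, the bad event says the $N$-tuple $(\ket{\psi_1},\dots,\ket{\psi_N})$ lies in the variety $\mathcal{P}^{\times N}$, where $\mathcal{P}\subset\mathbb{P}(\bbC^d)$ is the Segre variety of product states, and that $U$ maps the original tuple to it. So the set of bad tuples is the image of the algebraic map
\[
\Phi:\ SU(d)\times \mathcal{P}^{\times N}\ \longrightarrow\ \mathbb{P}(\bbC^d)^{\times N},\qquad (U,\ket{\psi_1},\dots,\ket{\psi_N})\mapsto (U^\dagger\ket{\psi_1},\dots,U^\dagger\ket{\psi_N}).
\]
If the (real) dimension of the source is strictly less than the dimension of the target, the image is a proper algebraic (hence measure-zero) subset, and we are done. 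Counting real dimensions: $\dim_{\mathbb R}SU(d)=d^2-1$, $\dim_{\mathbb R}\mathcal P = 2(d_1+d_2-2)$ (complex dimension $d_1-1+d_2-1$ of the Segre variety, times $2$), and $\dim_{\mathbb R}\mathbb{P}(\bbC^d)=2(d-1)$. So the source has real dimension $(d^2-1)+2N(d_1+d_2-2)$ and the target has $2N(d-1)$. The bad set is nowhere dense once $2N(d-1) > (d^2-1) + 2N(d_1+d_2-2)$, i.e. $2N(d_1d_2 - d_1 - d_2 + 1) > d_1^2d_2^2-1$, i.e. $2N(d_1-1)(d_2-1) > (d_1d_2-1)(d_1d_2+1)$, i.e. $N > \frac{(d_1d_2-1)(d_1d_2+1)}{2(d_1-1)(d_2-1)}=\frac{(d_1+1)(d_2+1)}{2}\cdot\frac{(d_1-1)(d_2-1)(d_1d_2-1)(d_1d_2+1)}{(d_1^2-1)(d_2^2-1)(d_1-1)(d_2-1)}$ — one checks $\frac{(d_1d_2)^2-1}{(d_1^2-1)(d_2^2-1)}\le 1$ with equality unreachable for $d_i\ge2$, so $N>\lfloor (d_1+1)(d_2+1)/2\rfloor$ suffices. (The floor comes in because $N$ is an integer; one should verify the inequality $N\ge \lfloor(d_1+1)(d_2+1)/2\rfloor+1 \Rightarrow 2N(d_1-1)(d_2-1)>(d_1d_2)^2-1$ directly for all $d_1,d_2$, handling the parity cases of $(d_1+1)(d_2+1)$ separately.)

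The steps, in order: (1) identify the bad set with $\mathrm{Im}(\Phi)$ and note $\Phi$ is a real-analytic (indeed real-algebraic) map between real-algebraic manifolds, so its image is contained in a subvariety of dimension at most $\dim$ of the source; a proper subvariety has Haar/Lebesgue measure zero. (2) Compute the three dimensions above, being careful to work projectively (or to quotient by the irrelevant global phases consistently on both sides — the phases of the $\ket{\psi_i}$ contribute equally to source and target and cancel, while the overall phase of $U$ is already excluded by using $SU$). (3) Do the arithmetic to convert $2N(d-1) > (d^2-1)+2N(d_1+d_2-2)$ into $N>\mathcal N(d_1,d_2)$, including the floor bookkeeping. (4) Conclude $V_N(d_1,d_2)=1$ for $N>\mathcal N(d_1,d_2)$, and restate the contrapositive as in the theorem.

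The main obstacle is step (2): making the dimension count rigorous requires care about what ``measure'' and ``volume'' mean on $[\bbC^d]^{\otimes N}$ versus on projective space, and about the fact that $\mathcal P$ is the Segre variety whose (quasi-)projective dimension is $d_1+d_2-2$ rather than something naive — and that the parametrization $(U,\text{product tuples})$ genuinely has a well-defined dimension despite the stabilizer $SU(d_1)\times SU(d_2)$ (and more) acting on it; including that stabilizer only \emph{lowers} the effective source dimension, so it can only help, but this should be stated cleanly rather than hand-waved. A secondary, purely computational nuisance is the parity analysis forcing the floor: one must check the boundary inequality separately when $(d_1+1)(d_2+1)$ is even versus odd to see that the stated integer threshold is exactly right. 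Everything else — that images of real-algebraic maps are measure-zero when dimensions drop, that $V_N$ is then $1$ — is standard.
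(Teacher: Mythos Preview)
Your approach is the same as the paper's --- a dimension/parameter count comparing the ``source'' $SU(d)\times\mathcal P^{\times N}$ to the ``target'' $\mathbb P(\bbC^d)^{\times N}$ --- but the execution contains a genuine error that prevents you from reaching the stated threshold.

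Your raw count gives the condition $2N(d_1-1)(d_2-1) > d^2-1=(d_1d_2)^2-1$, i.e.
\[
N \;>\; \frac{(d_1d_2)^2-1}{2(d_1-1)(d_2-1)}.
\]
You then assert that $\dfrac{(d_1d_2)^2-1}{(d_1^2-1)(d_2^2-1)}\le 1$, which would collapse the right-hand side to $(d_1+1)(d_2+1)/2$. That inequality is false: $(d_1d_2)^2-1-(d_1^2-1)(d_2^2-1)=d_1^2+d_2^2-2>0$ for all $d_i\ge 2$. Concretely, at $d_1=d_2=2$ your bound reads $N>15/2$, so you would only conclude $V_N=1$ for $N\ge 8$, not for $N\ge 5$ as the theorem requires.

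The missing ingredient is precisely the one you flag as ``hand-waved'': the local-unitary stabilizer must actually be subtracted, not merely mentioned. Since $(V_1\otimes V_2)\mathcal P=\mathcal P$, the map $\Phi$ is invariant under $(U,\{\psi_i\})\mapsto\big(U(V_1\otimes V_2)^{-1},\{(V_1\otimes V_2)\psi_i\}\big)$, so the generic fiber has real dimension at least $(d_1^2-1)+(d_2^2-1)$. Subtracting this from the source gives effective source dimension $(d_1^2-1)(d_2^2-1)+2N(d_1+d_2-2)$, and the comparison with $2N(d-1)$ then yields $2N(d_1-1)(d_2-1)>(d_1^2-1)(d_2^2-1)$, i.e.\ $N>(d_1+1)(d_2+1)/2$, which is the theorem. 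This is exactly what the paper does (phrased as counting free parameters in the quotient $SU(d)/\!\sim$ with $\sim$ the local-unitary equivalence). Once you carry out the quotient, the floor bookkeeping is immediate and no parity case split is needed: $N$ is an integer and $(d_1+1)(d_2+1)/2$ is either an integer or a half-integer.
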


\textit{Proof.} The proof is based on parameter counting. A state $\ket{\phi}=a_{1}\ket{11}+a_{2}\ket{12}...+a_{d}\ket{d_1d_2}$ can be made separable by the action of a unitary matrix $U$ only if the $a'_{j}=\sum_{i=1}^{d}U_{ji}a_{i}$ satisfy
\begin{align}\label{conditionsep2}
a'_{1}a'_{nd_{2}+k}=a'_{k}a'_{nd_{2}+1}  
\end{align}
for every $n=1,2,...,d_1-1$, $k=2,...,d_2$. In other words, separability enforces $(d_1-1)(d_2-1)$ independent complex constraints on each state; and so, separability of the whole set enforces $2N(d_1-1)(d_2-1)$ real constraints. 

In order to generate a quantum state distributed according to the Haar measure, one can use the following recipe: choose $d$ independent complex numbers $a_i$, $1\leq i\leq d$ according to a normal distribution each, then normalise the vector \cite{bengtsson2017geometry}. 
We shall repeat the procedure $N$ times in order to obtain a random set of $N$ qudit states. 
Since the $d$ complex coefficients are drawn independently, the constraints \eqref{conditionsep2} are all independent, and generally require $2N(d_1-1)(d_2-1)$ free parameters to satisfy, up to some special sets of states of zero measure.
Besides, additional local unitary transformations $U_{1}\otimes U_{2}$ do not affect the constraints. 
We can hence define the equivalence relation $R$: two matrices $X,Y\in SU(d)$ are related $X\stackrel{R}{\sim}Y$ if they are locally unitary equivalent, i.e., there are $U_{1}\in SU(d_1)$, and $U_{2}\in SU(d_2)$ such that $Y=(U_{1}\otimes U_{2})X$. 
Then the free parameters available in the quotient set $SU(d)/R$ are $(d_1^2-1)(d_2^2-1)$ real numbers. 
Note that for any quantum state $\ket{\psi}$, and two related matrices $X\stackrel{R}{\sim}Y$, the state $X \ket{\psi}$ meets presented constraints, iff the state $Y \ket{\psi}$ does so. 
Therefore, these many free parameters can only satisfy the constraints enforced by at most $\mathcal{N}(d_1,d_2)$ states. \qed


The smallest case that does not fall under Theorem \ref{thm3new} is $d_1=d_2=2$. First notice that  $\mathcal{N}(2,2)=4$: the construction given in Theorem \ref{thm2} is more specific than a bound based on counting. We proceed to a numerical estimate of $V_{4}(2,2)$. We generate Haar-random sets of states by generating four independent complex numbers according to a normal distribution as the components for each state vector. Then we normalize the vector \cite{bengtsson2017geometry}. By applying the sufficient criterion for AES provided in Theorem \ref{thm1}, we obtain a lower bound in the estimation of $V_4(2,2)$. Specifically,
upon generating $10^7$ sets of states, criterion \eqref{eqcriterion1} run on all 24 permutations of each set detected 2203 AES, i.e.~a fraction $f=2.2\times 10^{-4}$. Since we are studying a binary variable (to be or not to be AES), the standard deviation is $\sigma=\sqrt{10^{7}\times f(1-f) }=47$ (2\% of events). For the upper bound, we minimize the entanglement entropy with Matlab's \texttt{fminunc}. Since running \texttt{fminunc} is considerably slower, we sampled from $10^5$ Haar-random sets of states. We first run criterion \eqref{eqcriterion1}; when this one fails to detect AES, we turn to \texttt{fminunc} to minimise the total entanglement entropy over global unitaries \footnote{We set the threshold for detectable entanglement to $10^{-11}$, since the entropy formula was yielding positive results at the level $10^{-12}$ when fed with product states. A posteriori, we checked that only $0.2\%$ of the 8500 detected cases had an entropy in the range $10^{-8}$ to $10^{-11}$.}. After this procedure, 8500 sets out of $10^5$ were classified as AES, resulting in the proportion $8.5 \times 10^{-2}$, with a standard deviation of 88 ($1\%$ of events). In conclusion, for $V_4$ we have the numerical estimate
\ba
2.2\times 10^{-4}&\lesssim V_4(2,2)&\lesssim 8.5\times 10^{-2}
\ea with error estimates of $1-2\%$ on either bound. 
We conjecture that the true value is closer to the upper bound, as \texttt{fminunc} is quite reliable with problems of this size, while the criterion of Theorem \ref{thm1} may not be tight. 
We have also run the upper bound heuristic estimation for other cases, and we find $V_5(2,2)=V_8(3,3)=1$ (as expected from Theorem \ref{thm3new}) and $V_7(3,3)\approx 0.40$.

To conclude, we give a summary of $V_{N}(d_1, d_2 )$ for bipartite systems, based on the results in this subsection and in \cite{cai2020entanglement}: 
\begin{align}
\begin{split}
&V_N (d_1, d_2 ) =0,\ \textrm{if}\ N\leq \max(d_1,d_2)+1,\\
&V_N (d_1, d_2 ) >0,\ \textrm{if}\ N\geq d_1 +d_2,\\
&V_N (d_1, d_2 ) =1,\ \textrm{if}\ N> \left\lfloor{\frac{(d_{1}+1)(d_{2}+1)}{2}}\right \rfloor
\end{split}
\end{align}
The question on values of $V_N (d_1, d_2 )$ for $\max(d_1,d_2)+1<N<d_1 +d_2$ and $d_1,d_2>2$ remains open; we leave it for future work.


\section{Absolutely entangled set with respect to multi-partition}
\label{sec:multi}

In this section, we generalize the notion of AES to multipartitions of a system and give a lower bound on the number of states in a multipartite AES. Then, we present examples of AES for fixed arbitrary multipartition and construct sets that are absolutely entangled with respect to all partitions of the system.

\subsection{The definition and the lower bound of the number of states.}

Consider any number $d$ with the prime factorization $d=p_1p_2 \cdots p_l$. One may consider bipartitions of the space $\mathbb{C}^d$ according to different factorizations of a number $d$. Moreover, we may consider multipartitions of the space $\mathbb{C}^d$. We begin with the following two definitions:
 
\begin{defi}
\label{Defi3}
Consider a set of quantum states $S=\{\rho_1,\ldots,\rho_N\}$ in a Hilbert space $\mathbb{C}^d$ of non-prime dimension~$d$.
\begin{itemize}
    \item[(i)] The set $S$ is said to be \emph{absolutely entangled with respect to the $(d_1,d_2,\ldots,d_k)$-partition} where $\prod_{i=1}^k d_i =d$, if for every unitary matrix $U\in SU(d)$, at least one state $U\rho_nU^\dagger$ is entangled with respect to multipartition $\mathbb{C}^{d_1}\otimes\cdots\otimes\mathbb{C}^{d_k}$.
    \item[(ii)] Let P be the set of all possible partitions of $\mathbb{C}^d$. The set is said to be \emph{absolutely entangled with respect to any partition}, if for every partition $p\in P$ and every unitary matrix $U \in SU(d)$, at least one state $U\rho_{n}U^\dagger$ is entangled with respect to $p$.
\end{itemize}
\end{defi}

We start by deriving a lower bound on the number of pure states required to form an AES.

\begin{theorem}\label{thm3text} Consider a $k$-partition $\mathbb{C}^d=\mathbb{C}^{d_1} \otimes\cdots\otimes\mathbb{C}^{d_k}$ of the Hilbert space $\mathbb{C}^d$, and denote $d'= \max{(d_1,d_2, \ldots ,d_k)}$. Any set consisting of $d'+1$ vectors cannot form AES with respect to the presented $(d_1, \ldots ,d_k)$-partition.
\end{theorem}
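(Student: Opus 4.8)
The plan is to reduce the multipartite statement to the bipartite lower bound $N_{min}(d_1,d_2)=\max(d_1,d_2)+2$ already established in \cite{cai2020entanglement}. The key observation is that a state which is a product state with respect to the full $k$-partition $\mathbb{C}^{d_1}\otimes\cdots\otimes\mathbb{C}^{d_k}$ is in particular a product state with respect to the coarsened bipartition in which one distinguished factor $\mathbb{C}^{d_j}$ (chosen so that $d_j = d' = \max_i d_i$) is grouped against the tensor product of all the remaining factors $\mathbb{C}^{d/d_j}$. Hence, if a set of vectors could be made simultaneously product for the $k$-partition by some $U\in SU(d)$, it would in particular be made simultaneously product for this $(d', d/d')$ bipartition by the same $U$. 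Equivalently, any AES with respect to the $(d_1,\ldots,d_k)$-partition is also an AES with respect to the $(d', d/d')$-bipartition.

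First I would fix $j$ with $d_j=d'$ and define $d_2' := d/d' = \prod_{i\neq j} d_i$, so that $\mathbb{C}^d \cong \mathbb{C}^{d'}\otimes\mathbb{C}^{d_2'}$. Next I would invoke the contrapositive of the bipartite result: since $\max(d',d_2') \ge d'$, any set of at most $d'+1$ vectors fails to be an AES with respect to the $(d',d_2')$-bipartition, i.e.\ there exists $U\in SU(d)$ making all of them product across $\mathbb{C}^{d'}\otimes\mathbb{C}^{d_2'}$. Then I would note that I am not quite done, because product across the coarse bipartition does not imply product across the fine $k$-partition; the coarsening goes the wrong way for directly concluding. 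So the argument must instead run forward: one cannot simply lift. The correct route is the one above — a $k$-partite AES is necessarily a bipartite AES, so it needs at least $d'+2$ states; contrapositively $d'+1$ vectors never form a $k$-partite AES.

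The one point that needs care — and which I expect to be the only real obstacle — is verifying the direction of the implication between ``AES for the fine partition'' and ``AES for the coarse partition''. Explicitly: entanglement with respect to the coarse bipartition implies entanglement with respect to the fine multipartition (a fine-product state is automatically coarse-product, so failing to be coarse-product forces failing to be fine-product). Therefore, for a fixed $U$, if at least one state $U\rho_n U^\dagger$ is entangled across $\mathbb{C}^{d'}\otimes\mathbb{C}^{d_2'}$ then that same state is entangled across $\mathbb{C}^{d_1}\otimes\cdots\otimes\mathbb{C}^{d_k}$. Consequently, being an AES for the coarse bipartition implies being an AES for the fine multipartition — which is the wrong direction. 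So I must reverse: I want to show $d'+1$ vectors are \emph{not} a fine AES, and for that I use that $d'+1$ vectors are not a coarse AES, meaning some $U$ makes them all coarse-product; but coarse-product need not be fine-product. This shows the naive reduction does not close, and the honest proof must go back to parameter-counting or to the explicit construction underlying the bipartite bound, arranging the global unitary so that all $d'+1$ images are product across \emph{every} cut simultaneously. I would carry this out by the same mechanism as in the bipartite case: with only $d'+1$ vectors one can choose $U$ sending them into a set supported on $\mathrm{span}\{\ket{1}\ket{e_1},\ldots,\ket{1}\ket{e_{d'+1}}\}$-type configurations, i.e.\ fixing the first factor of the largest subsystem, which makes each vector a product across the $\mathbb{C}^{d'}$-versus-rest cut and, with a little more bookkeeping on the remaining factors, across the full $k$-partition; the bound $d'+1 \le$ (dimension available after fixing that factor) is exactly what makes this possible and is where $d' = \max_i d_i$ enters.
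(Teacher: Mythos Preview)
You correctly diagnose that the coarsening reduction fails: ``not a coarse AES'' gives a $U$ making all states coarse-product, but coarse-product does not imply fine-product, so nothing follows for the $k$-partition. The trouble is that your closing sketch---which is where the actual proof has to happen---is both reversed and incomplete.

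\emph{Reversed:} you propose sending the $d'+1$ states into $\mathrm{span}\{\ket{1}\ket{e_1},\ldots,\ket{1}\ket{e_{d'+1}}\}$ with $\ket{1}$ sitting in the largest factor $\mathbb{C}^{d'}$. That requires $d'+1$ independent vectors $\ket{e_j}$ in the complementary space $\mathbb{C}^{d/d'}$, but $d/d'$ can be as small as $2$ (take $d_1=8$, $d_2=2$), so in general there is no room. The paper does the opposite: it uses the large factor as the carrier and fixes the small ones, mapping $\ket{\xi_j}\mapsto\ket{j}\otimes\ket{1}^{\otimes(k-1)}$ for $j=1,\ldots,d'$. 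This is where $d'=\max_i d_i$ actually enters.

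\emph{Incomplete:} even with the orientation corrected, only $d'$ vectors fit into $\mathbb{C}^{d'}\otimes\ket{1}^{\otimes(k-1)}$, so the $(d'+1)$-th state must be handled separately, and this is the heart of the argument---not ``a little more bookkeeping''. The paper writes $\ket{\phi_{d'+1}}=\alpha\ket{\Psi}+\beta\ket{\xi_{d'+1}}$ with $\ket{\Psi}$ in the span of $\ket{\xi_1},\ldots,\ket{\xi_{d'}}$, so that under $U$ one has $\ket{\Psi}\mapsto\ket{\psi}\otimes\ket{1}^{\otimes(k-1)}$ for some $\ket{\psi}\in\mathbb{C}^{d'}$. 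One still has the freedom to choose the image of the extra basis vector $\ket{\xi_{d'+1}}$; sending it to $\ket{\psi}\otimes\ket{2}\otimes\ket{1}^{\otimes(k-2)}$ (orthogonal to all previous images) yields $\ket{\phi_{d'+1}}\mapsto\ket{\psi}\otimes(\alpha\ket{1}+\beta\ket{2})\otimes\ket{1}^{\otimes(k-2)}$, which is fully product across all $k$ factors. Without this step the proof does not close. (As an aside: the paper as printed maps $\ket{\xi_{d'+1}}$ to $\ket{\psi}\otimes\ket{2}^{\otimes(k-1)}$, which for $k\geq 3$ produces a GHZ-type state $\alpha\ket{1}^{\otimes(k-1)}+\beta\ket{2}^{\otimes(k-1)}$ on the last $k-1$ factors and is therefore not fully product; the choice above is the evident fix.)
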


\textit{Proof.} The proof is analogous to that of bipartite states in Ref.~\cite{cai2020entanglement}. 
Consider any set of $d'$ states $\{\ket{\phi_i},i=1,\ldots,d'+1\}$. There exists a basis $\{\ket{\xi_j},j=1,\ldots,d\}$ such that $\ket{\phi_i}=\sum_{j=1}^{d'} c_{ji}\ket{\xi_i}$ for $i=1,\ldots,d'$ and
\ban
\ket{\phi_{d'+1}}&=&\sum_{j=1}^{d'} c_{j,d'+1}\ket{\xi_j}+\beta\ket{\xi_{d'+1}}\equiv\alpha\ket{\Psi}+\beta\ket{\xi_{d'+1}}
\ean where $\ket{\Psi}$ is normalised (note that if all $c_{j,d'+1}=0$, $\ket{\Psi}$ can be any state). Then, assuming $d'=d_1$ without loss of generality, a global unitary can certainly map $\ket{\xi_j}\rightarrow\, \ket{j}\otimes\ket{1}^{\otimes k-1}$ for $j=1,\ldots,d'$; this will induce
\ba
\ket{\phi_i}\longrightarrow\, \ket{\varphi_i}\otimes\ket{1}^{\otimes k-1}&\,,\;&i=1,\ldots,d',
\ea 
and $\ket{\Psi}\rightarrow\ket{\psi}\otimes\ket{1}^{\otimes k-1}$. Now we can choose to map $\ket{\xi_{d'+1}}\rightarrow\, \ket{\psi}\otimes\ket{2}^{\otimes k-1}$, which is indeed orthogonal to all the $\ket{j}\otimes\ket{1}^{\otimes k-1}$; therefore,
\ba
\ket{\phi_{d'+1}}&\longrightarrow& 
\ket{\psi}\otimes(\alpha\ket{1}^{\otimes k-1}+\beta\ket{2}^{\otimes k-1})\,.
\ea Thus, there exists a basis in which all the $d'+1$ states are product, and hence cannot form an AES. \qed

Notice $\max{(d_1,d_2, \cdots ,d_k)}\geq \max_j{(p_j)}$, the largest prime factor of $d$. Thus, given $d$, a set of pure states must contain at least $\max_j{(p_j)}+2$ states to be AES according to some partition. We notice that the same result has also been obtained in Ref.~\cite{lovitz2019decomposable} by considering the decomposition of Gram matrix.

\subsection{Construction of a special family of AES.}

Here we present examples of families of sets that are AES with respect to multipartitions. 

\begin{theorem}\label{thm4}
Consider a Hilbert space $\mathbb{C}^d$ of non-prime dimension with the $k$-partition $\mathbb{C}^d=\mathbb{C}^{d_1} \otimes \mathbb{C}^{d_2}\otimes\cdots\otimes\mathbb{C}^{d_k}$ and an orthonormal basis $\{\ket{\xi_i}\}_{i=1}^d$. Then, the following $N= d_1+\cdots+d_k-k+2$ states,

\begin{align}\label{theset}
\begin{split}
    \ket{\phi_1} &= \ket{\xi_1}, \\
    \ket{\phi_i} &= a\ket{\xi_{1}}+\sqrt{1-a^2}\ket{\xi_{i}},\;i=2,\ldots,N
\end{split}
\end{align}
form an AES for any parameter $a\in(a_{min},1)$, where
\ba
a_{\text{min}}&:=&\max_{i}\,\sqrt{\frac{(d_{i}-1)D_{i}}{(d_{i}-1+\frac{1}{k-1})(D_{i}+\frac{1}{k-1})}},
\ea $i=1,2,\ldots,k-1$ and $D_{i}=\sum_{j=i+1}^{k}(d_{j}-1)+\frac{k-1-i}{k-1}$.
\end{theorem}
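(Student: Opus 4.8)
## Proof Proposal for Theorem \ref{thm4}

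My plan is to argue by contrapositive: convert the hypothesis ``some $U$ simultaneously separates all $\ket{\phi_i}$'' into a single scalar inequality in $a$, and show it fails for $a>a_{\min}$. Suppose $U\in SU(d)$ maps every $U\ket{\phi_i}$ to a fully product vector of $\mathbb{C}^{d_1}\otimes\cdots\otimes\mathbb{C}^{d_k}$. Absorbing local unitaries I may assume $U\ket{\phi_1}=U\ket{\xi_1}=\ket{1}^{\otimes k}$; then $U\ket{\xi_j}\perp\ket{1}^{\otimes k}$ for $j\geq 2$ and $U\ket{\phi_j}=a\ket{1}^{\otimes k}+\sqrt{1-a^2}\,U\ket{\xi_j}$. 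Writing the product vector as $U\ket{\phi_j}=\ket{v_1^{(j)}}\otimes\cdots\otimes\ket{v_k^{(j)}}$ and $x_m^{(j)}:=|\langle1|v_m^{(j)}\rangle|^2\in(0,1]$, the overlap $|\langle1^{\otimes k}|U\phi_j\rangle|=a$ forces $\prod_{m=1}^k x_m^{(j)}=a^2$ for every $j=2,\dots,N$.

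Next I would decompose $(\ket{1}^{\otimes k})^{\perp}$ into the single-excitation sectors $L_1^{(m)}$, spanned by product basis vectors with $\ket{1}$ in every factor except one vector orthogonal to $\ket{1}$ in factor $m$ (so $\dim L_1^{(m)}=d_m-1$), together with the remaining $\geq 2$-excitation subspace. A one-line computation with the product structure gives that the weight of $U\ket{\xi_j}$ on $L_1^{(m)}$ equals $u_m^{(j)}:=\frac{a^2}{1-a^2}\left(\frac{1}{x_m^{(j)}}-1\right)\geq 0$; hence, with $t:=(1-a^2)/a^2$ (so $1+t=a^{-2}$), the relation $\prod_m x_m^{(j)}=a^2$ becomes exactly $\prod_{m=1}^k\left(1+t\,u_m^{(j)}\right)=1+t$, while $\sum_m u_m^{(j)}\leq 1$ since $U\ket{\xi_j}$ is a unit vector. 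The second, orthonormality ingredient: $\{U\ket{\xi_j}\}_{j=2}^{d}$ is an orthonormal basis of $(\ket{1}^{\otimes k})^{\perp}$, so summing squared projections onto $L_1^{(m)}$ gives $\sum_{j=2}^{d}u_m^{(j)}=\dim L_1^{(m)}=d_m-1$, and in particular $\sum_{j=2}^{N}u_m^{(j)}\leq d_m-1$.

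The final step is to average over the $N-1$ states $j=2,\dots,N$. Setting $c_m:=(d_m-1)/(N-1)$ and $\bar u_m:=\frac{1}{N-1}\sum_{j=2}^N u_m^{(j)}\leq c_m$, and using that $u\mapsto\sum_m\log(1+t\,u_m)$ is concave (so its super-level set $\{\prod_m(1+t u_m)\geq 1+t\}$ is convex), one gets $\prod_m(1+t\bar u_m)\geq 1+t$, hence by monotonicity $\prod_{m=1}^k(1+t\,c_m)\geq 1+t$. Contrapositively: if $a$ is close enough to $1$ that $\prod_{m=1}^k(1+t\,c_m)<1+t$, then no separating $U$ exists, so \eqref{theset} is an AES. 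The left side equals $1$ at $t=0$ with derivative $\sum_m c_m-1=-1/(N-1)<0$ and grows like $t^k\prod_m c_m$, so this condition cuts out an interval $a\in(a^{*},1)$; it then remains to check $a^{*}\leq a_{\min}$, which for $k=2$ is an equality reproducing the threshold of \eqref{specialset}, and for $k\geq 3$ is a direct comparison (the bound $\prod_m(1+tc_m)<1+t$ is in fact somewhat stronger than $a>a_{\min}$).

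I expect the crux to be two things. First, recognising that it suffices to control only the single-excitation sectors $L_1^{(m)}$ — the higher sectors merely relax the constraints — and that the product-state condition collapses to the clean pair $\prod_m(1+tu_m^{(j)})=1+t$, $\sum_m u_m^{(j)}\leq 1$. Second, the algebra reconciling the resulting condition with the precise $a_{\min}$ in the statement: squeezing out the sharpest threshold (and, I suspect, the subsystem-indexed quantities $D_i$ and the $\frac{1}{k-1}$ factors) requires using the constraint $\sum_m u_m^{(j)}\leq 1$ — which is vacuous exactly when $k=2$ — through an LP-duality / sequential-peeling argument rather than the single averaging step above.
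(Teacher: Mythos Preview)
Your setup and the Jensen step are correct and take a genuinely different route from the paper. The paper does \emph{not} treat all $k$ factors at once: it peels them off sequentially, first viewing $\mathbb{C}^d=\mathbb{C}^{d_1}\otimes\mathbb{C}^{d_{r_1}}$ with $d_{r_1}=d_2\cdots d_k$, then $\mathbb{C}^{d_{r_1}}=\mathbb{C}^{d_2}\otimes\mathbb{C}^{d_{r_2}}$, and so on. At stage $i$ it tracks a scalar $S^{(i)}$ (the total squared mass of the first $N-1$ rows of the unitary on a shrinking block of columns) and obtains, from the \emph{bipartite} separability constraint plus an extremisation over the row weights, an inequality of the form $a\le\sqrt{(d_i-1)S^{(i)}/[(S^{(i-1)}-d_i+1)(S^{(i-1)}-S^{(i)})]}$. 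Choosing $a$ above the $i$-th term in the definition of $a_{\min}$ forces $S^{(i)}>D_i$; after $k-1$ iterations this gives $S^{(k-1)}>d_k-1$, contradicting the column-norm bound $S^{(k-1)}\le d_k-1$. Thus the quantities $D_i$ and the $1/(k-1)$ increments are not artefacts to be reverse-engineered---they are exactly what the sequential argument outputs. Your single-shot necessary condition $\prod_m(1+tc_m)\ge 1+t$ is more symmetric and, on the evidence, strictly tighter.

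The gap is your last step. To prove the theorem \emph{as stated} you must show that $a>a_{\min}$ implies $\prod_m(1+tc_m)<1+t$, i.e.\ that your threshold $a^{*}$ satisfies $a^{*}\le a_{\min}$. For $k=2$ this is an identity; for $k\ge 3$ it is plausible (three qubits give $a^{*}\approx 0.673$ versus $a_{\min}=\sqrt{5/9}\approx 0.745$) but you have not supplied an argument, and it is not the one-line ``direct comparison'' you claim---one side is a max of $k-1$ square roots built from the nested tails $D_i$, the other the positive root of a degree-$k$ polynomial in $t$. Two side remarks: the constraint $\sum_m u_m^{(j)}\le 1$ is in fact implied by $\prod_m(1+tu_m^{(j)})=1+t$ for \emph{every} $k$, not just $k=2$ (expand the product: all higher elementary symmetric terms are nonnegative), so it carries no extra information and cannot be the missing ingredient; and the ``sequential peeling'' you allude to in the final paragraph is essentially the paper's own method, which would replace rather than refine your averaging argument.
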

\noindent Note that $a_{\text{min}}=\sqrt{\frac{(d_{1}-1)(d_{2}-1)}{d_1d_2}}$ in case of bipartitions, hence recovering the result in Ref.~\cite{cai2020entanglement}.   

Also, Theorem 3 in Ref.~\cite{lovitz2019decomposable} provides a construction of a set of $N'=2\max(d_{1},d_{2},...,d_k)+1$ states which form an AES with respect to the same k-partition $\mathbb{C}^d=\mathbb{C}^{d_1} \otimes \mathbb{C}^{d_2}\otimes\cdots\otimes\mathbb{C}^{d_k}$. So the minimum number of states known to form an AES is currently $\min(N,N')$. Finally notice that $N$ matches the lower bound $d'+2$ of Theorem \ref{thm3text} only for a bipartition ($k=2$) and for $d_2=2$; while $N'>d'+2$ in all cases. Thus the tightness of that lower bound for any multipartition remains an open problem.

Figure \ref{fig3} presents values of the number of states $N$ and parameter $a_{\text{min}}$ for several partitions of the Hilbert space $\mathbb{C}^{32}$. 

\begin{figure}
\renewcommand{\arraystretch}{1.4}
\begin{tabular}{|c|c|c|c|}
\hline
$\:$\specialcell{Partition\\{\small $(d_1,\ldots,d_k )$}}$\:$
&$\:$\specialcell{Partition\\of $\mathbb{C}^d$}$\:$
&$\:$\specialcell{Number of \\ states $N_p$}  $\:$
&$\:$\specialcell{Parameter\\$a_{\text{min},\, p}$ }$\:$
\\ \hline\hline
$(2,16)$& $\mathbb{C}^2 \otimes \mathbb{C}^{16} $ & $ 18$& $0.685$
\\ \hline
$(4,8)$& $\mathbb{C}^4 \otimes \mathbb{C}^{8}$ & $ 12$& $0.810$
\\ \hline\hline
$(2,2,8)$& $(\mathbb{C}^2 )^{\otimes 2} \otimes \mathbb{C}^8$ & $ 11$& $0.789$
\\ \hline
$(2,4,4)$& $ \mathbb{C}^2 \otimes (\mathbb{C}^4 )^{\otimes 2} $ & $ 9$& $0.787$
\\ \hline\hline
$(2,2,2,4)$& $(\mathbb{C}^2 )^{\otimes 3} \otimes \mathbb{C}^4$ & $ 8$& $0.823$
\\ \hline\hline
$(2,2,2,2,2)$& $(\mathbb{C}^2 )^{\otimes 5} $ & $ 7$& $0.800$
\\ \hline\hline
\multicolumn{2}{|c|}{All partitions} & $ 18$& $0.823$
\\ \hline
\end{tabular}
\caption{There are six non-equivalent partitions $p \in P$ of the space $\mathbb{C}^{32}$ related to different factorizations of the number~$32$. 
For each such partition $p=(d_1,\ldots,d_k )$, where $d_1\cdots d_k=32$, we constructed an AES with respect to $(d_1,\ldots,d_k )$-partition (see Definition \ref{Defi3}) consisting of $N_p$ states with the $a_{\text{min},p}$ parameter determined in Theorem \ref{thm4}. 
The set of $N=\max_{p\in P} N_p$ states with the $A= \max_{p\in P} a_{\text{min},p}$ parameter form an AES with respect to any partition.}
\label{fig3}
\end{figure}

\textit{Proof}. The proof of Theorem \ref{thm4} consists of two parts. Firstly, we consider the bipartition $\bbC^d=\bbC^{d_1}\otimes\bbC^{d_{r_1}}$ of the space $\bbC^{d_1 \cdot d_{r_1}}$ and derive values for the parameter $a_{\text{min}}$ sufficient to form AES with respect to given bipartition. Secondly, we apply our approach iteratively, and generalize results for multipartitions.

Firstly, consider the bipartition $\bbC^d=\bbC^{d_1}\otimes\bbC^{d_{r_1}}$, where $d_{r_1}=d_2\cdot d_3\cdots d_k$. Any pure state $\ket{\psi}\in \bbC^{d}$ can be written in the computational basis as
\begin{equation}\label{mteq1}
\ket{\psi}=a_{1}\ket{11}+a_{2}\ket{12}+...+a_{d}\ket{d_1d_{r_1}}.
\end{equation}
We know that $\ket{\psi}$ is separable only if

\begin{equation}\label{mteq2}
a_{1}\cdot a_{n\cdot d_{r_1}+j}=a_{n\cdot d_{r_1}+1}\cdot a_{j},
\end{equation}
where integers $n\in[1,d_1-1]$, $j\in[2,d_{r_1}]$. Taking $|\cdot|^2$ on both sides of Eq.~\eqref{mteq2} and summing over $n$ and $j$ we get, 
\begin{equation}\label{mteq3}
|a_{1}|^{2}\cdot (\sum_{n=1}^{d_1-1}\sum_{j=2}^{d_{r_1}}|a_{n\cdot d_{r_1}+j}|^{2})=\sum_{n=1}^{d_1-1}|a_{n\cdot d_{r_1}+1}|^2\cdot \sum_{j=2}^{d_{r_1}}|a_{j}|^{2}.
\end{equation}


Suppose there exists a unitary matrix $U$ that maps all $N$ states to product states. Without loss of generality, we may assume that $U\ket{\phi_1} = U\ket{\xi_1} = \ket{11}$. 
Remaining states $\{\ket{\xi_i}\}_{i>1}$ of the orthonormal basis transform as
\begin{equation}
U\ket{\xi_i}=b_{i2}\ket{12}+b_{i3}\ket{13}+...+b_{id}\ket{d_1d_{r_{1}}}.
\end{equation}
Since $\{\ket{\xi_i}\}_{i=1}^d$ is an orthonormal basis, the coefficients $b_{ij}$ considered as $(i-1,j-1)$-th entry of a $(d-1,d-1)$ matrix form a unitary matrix, denoted $U_s$. Denote by~$M_N$ a submatrix of first $N-1$ rows of~$U_s$, note that~$M_N$ is a $(N-1,d-1)$ matrix. 
Furthermore, we divide~$M_N$ into three parts: the first part consists of elements in columns $1$ to $d_{r_{1}}-1$, second consists of elements in the columns $n\cdot d_{r_{1}}$ $(n=1,2,\ldots,d_1-1)$, and the third contains the remaining elements. 
Denote the sum of the squared norm of the elements in each part by $S^{(1)}$, $B^{(1)}$, $T^{(1)}$ respectively. 
Similarly, denote by $I$ such a sum of all elements in the matrix $M_N$, see Fig.~\ref{fig1} for a pictorial view of each part. 
In Appendix \ref{supthm3} we show that using Eq.~\eqref{mteq3}, we arrive at a necessary condition of separability, which is,
\begin{equation}\label{mteq4}
     a\leq \sqrt{\frac{(d_1-1)S^{(1)}}{[I-(d_1-1)](I-S^{(1)})}}.    
\end{equation}
The RHS of Ineq.\eqref{mteq4} monotonically increases with $S^{(1)}$, and when $S^{(1)}=D_{1}$ it is $\sqrt{\frac{(d_{1}-1)D_{1}}{(D_{1}+\frac{1}{k-1})(d_{1}-1+\frac{1}{k-1})}}$ (note that $I=N-1$). Therefore if we let $a>\sqrt{\frac{(d_{1}-1)D_{1}}{(D_{1}+\frac{1}{k-1})(d_{1}-1+\frac{1}{k-1})}}$, condition \eqref{mteq4} is satisfied only when $S^{(1)}>D_{1}$.\vspace{\baselineskip}

Secondly, consider further partitioning of subsystem $\bbC^{d_{r_1}}$ into $\bbC^{d_{2}}\otimes\bbC^{d_{r_{2}}}$, where $d_{r_{2}}=d_{3}\cdots d_k$. There is an additional necessary criterion for separability of a given pure state (\ref{mteq1})
\begin{equation}\label{sepstronger}
a_{1}\cdot a_{m\cdot d_{r_2}+j}=a_{m\cdot d_{r_2}+1}\cdot a_{j},
\end{equation}
for all $m\in[1,d_2-1]$, $j\in[2,d_{r_2}]$. Once again, by taking the square norm on both sides of Eq.~\eqref{sepstronger} and summing over $m,j$, the separability criterion takes the form,  
\begin{equation}\label{mteq5}
|a_{1}|^{2}\cdot (\sum_{m=1}^{d_2-1}\sum_{j=2}^{d_{r_2}}|a_{m\cdot d_{r_2}+j}|^{2})=\sum_{m=1}^{d_2-1}|a_{m\cdot d_{r_2}+1}|^2\cdot \sum_{j=2}^{d_{r_2}}|a_{j}|^{2}.
\end{equation}
\begin{figure}[h]
\centering
\includegraphics[scale=0.3]{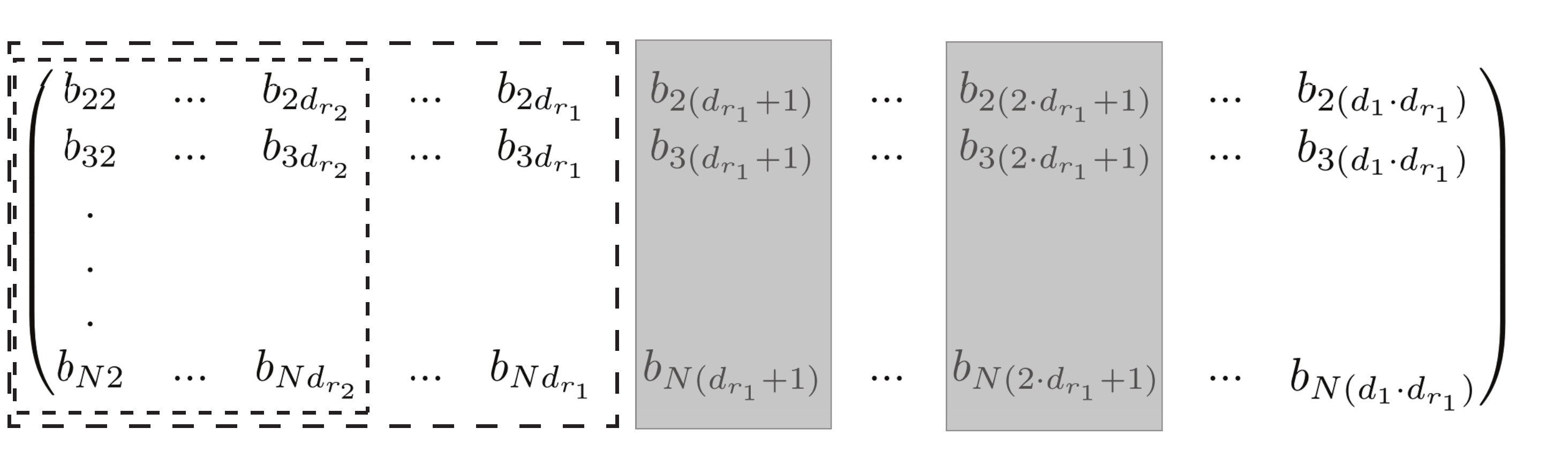}
\caption{Matrix $M_{N}$ divided into three parts: outer block of dashed line related to $S^{(1)}$, shaded blocks to $B^{(1)}$, and the remaining part to $T^{(1)}$ respectively. Note that the shaded part $B^{(1)}$ consists of a number of separate single columns, and when $d_1=2$, there exists only one column of elements $b_{2(d_{r_1}+1)}$ related to $B^{(1)}$. Moreover, the block corresponding to $S^{(1)}$ can be further divided into three parts while considering the partition $\bbC^{d_{r_1}}=\bbC^{d_{2}}\otimes\bbC^{d_{r_2}}$. The inner block of dashed line is related to $S^{(2)}$. The corresponding blocks related to $S^{(i)}$, $i>2$ follow similar pattern.}
    \label{fig1}
\end{figure}
Note that the first part of the matrix $M_{N}$ (corresponding to $S^{(1)}$) might be further divided into three parts in a similar fashion as before. 
In particular, the sum of the squared norm of the elements in the first part (i.e., columns $1$ to $d_{r_2}-1$) equal $S^{(2)}$ (see Fig.~\ref{fig1}). The a necessary condition for separability is now,
\begin{equation}\label{mteq6}
     a\leq \sqrt{\frac{(d_2-1)S^{(2)}}{[S^{(1)}-(d_{2}-1)](S^{(1)}-S^{(2)})}}, 
\end{equation}
the derivation of which is analogous to that of Ineq.\eqref{mteq4} and can be found in Appendix \ref{supthm3}. Since $S^{(1)}>D_{1}$, Ineq.~\eqref{mteq6} is satisfied only if
\begin{equation}\label{mteq7}
     a\leq \sqrt{\frac{(d_2-1)S^{(2)}}{[D_{1}-(d_{2}-1)](D_{1}-S^{(2)})}}. 
\end{equation}
Using Ineq.~\eqref{mteq7}, it can be seen that when we let 
\begin{equation}
a>\sqrt{\frac{(d_{2}-1)D_{2}}{(D_{2}+\frac{1}{k-1})(d_{2}-1+\frac{1}{k-1})}},
\end{equation}
the separability condition \eqref{mteq5} can be satisfied only if $S^{(2)}>D_{2}$.
The remaining subsystems are dealt with analogously, that is,
\begin{equation}\label{conseq}
     a\leq \sqrt{\frac{(d_i-1)S^{(i)}}{[S^{(i-1)}-(d_{i}-1)](S^{(i-1)}-S^{(i)})}} 
\end{equation}
for $i=2,3,\ldots,k$. And if 
\begin{equation}
a\in\Bigg(\textrm{max}_{i}\Big\{\sqrt{\frac{(d_{i}-1)D_{i}}{(d_{i}-1+\frac{1}{k-1})(D_{i}+\frac{1}{k-1})}}\Big\},1
\Bigg),
\end{equation}
then we have a set of necessary conditions for total separability, namely,
\begin{equation}\label{necessaryC}
     S^{(i)}>D_{i}.
\end{equation}
With condition \eqref{necessaryC} we obtain that
\begin{equation}
 S^{(k-1)}>d_{k}-1.
\end{equation} 
However, $S^{(k-1)}$ is the sum of the squared norm of $d_{k}-1$ columns of the matrix $M_{N}$, so there must be 
\begin{equation}
     S^{(k-1)}\leq d_{k}-1.
\end{equation}
The contradiction indicates that the necessary conditions for total separability \eqref{conseq} can not be satisfied simultaneously when the number of input states $N\equiv \sum_{i=1}^{k}d_{i}-k+2$ and $a\in(a_{min},1)$.\qed\vspace{\baselineskip}

From the proof of Theorem \ref{thm4}, we may directly derive two useful corollaries.

\begin{coro}
For bipartition $\bbC^{d}=\bbC^{d_1}\otimes \bbC^{d_2}$, the lower bound of a constant $a_{\text{min}}$ such that the set given in Theorem \ref{thm4} forms an AES decreases when the number of input states $N$ increases from $d_1+d_2$ to $d$.
\end{coro}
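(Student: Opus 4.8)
The plan is to re-run the first (bipartite) step of the proof of Theorem~\ref{thm4} with $k=2$ while keeping the number of states $N$ as a free parameter instead of fixing $N=d_1+d_2$. Concretely, for the bipartition $\bbC^d=\bbC^{d_1}\otimes\bbC^{d_2}$ and any integer $N$ with $d_1+d_2\le N\le d$, I would consider the $N$ states $\ket{\phi_1}=\ket{\xi_1}$ and $\ket{\phi_i}=a\ket{\xi_1}+\sqrt{1-a^2}\ket{\xi_i}$, $i=2,\ldots,N$ (well defined precisely because $N\le d$). Assuming a global unitary $U$ maps all $N$ of them to product states, the computation leading to Ineq.~\eqref{mteq4} goes through verbatim, now with $I=N-1$ and $d_{r_1}=d_2$, giving the necessary condition $a\le\sqrt{(d_1-1)S^{(1)}/[(N-d_1)(N-1-S^{(1)})]}$.

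Next I would use that this right-hand side is increasing in $S^{(1)}$ together with $S^{(1)}\le d_2-1$, which holds because $S^{(1)}$ is the sum of the squared norms of $d_2-1$ columns of a submatrix of the unitary $U_s$. For a bipartition one has $D_1=d_2-1$, so this is exactly the bound already appearing in Theorem~\ref{thm4} and no further refinement step (on further subsystems) is needed. Substituting $S^{(1)}=d_2-1$ then shows such a $U$ can exist only if $a\le a_{\text{min}}(N):=\sqrt{(d_1-1)(d_2-1)/[(N-d_1)(N-d_2)]}$, i.e.\ the set is an AES for every $a\in(a_{\text{min}}(N),1)$; note that $N=d_1+d_2$ recovers $a_{\text{min}}=\sqrt{(d_1-1)(d_2-1)/(d_1d_2)}$, as it must.

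It then remains only to show $a_{\text{min}}(N)$ strictly decreases as $N$ ranges over $d_1+d_2,\ldots,d$, and this is elementary: for $N\ge d_1+d_2$ both factors $N-d_1\ge d_2$ and $N-d_2\ge d_1$ are positive and strictly increasing in $N$, hence $g(N):=(N-d_1)(N-d_2)$ strictly increases and $a_{\text{min}}(N)=\sqrt{(d_1-1)(d_2-1)/g(N)}$ strictly decreases (hitting $1/\sqrt d$ at $N=d$). The only place demanding care — the closest thing to an obstacle — is verifying that the derivation in the proof of Theorem~\ref{thm4} indeed carries over unchanged for arbitrary $N\le d$, and in particular that $S^{(1)}\le d_2-1$ remains the operative constraint rather than some $N$-dependent quantity; once that is granted, the monotonicity above finishes the proof.
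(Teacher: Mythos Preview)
Your proposal is correct and follows essentially the same route as the paper: both plug $k=2$ and $I=N-1$ into Ineq.~\eqref{mteq4}, use the column bound $S^{(1)}\le d_2-1$ (which for $k=2$ is exactly the terminal constraint $S^{(k-1)}\le d_k-1$ in the proof of Theorem~\ref{thm4}), and read off $a_{\text{min}}(N)=\sqrt{(d_1-1)(d_2-1)/[(N-d_1)(N-d_2)]}$, which decreases from $\sqrt{(d_1-1)(d_2-1)/(d_1d_2)}$ at $N=d_1+d_2$ to $1/\sqrt{d}$ at $N=d$. Your only stated worry---that the derivation of \eqref{mteq4} does not secretly use a specific value of $N$---is indeed harmless, as the construction of $M_N$ and the argument in Appendix~\ref{supthm3} treat $N$ as a free integer throughout.
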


\begin{proof}
Consider Ineq.~\eqref{mteq4} for $k=2$. Using $I=N-1$, elementary calculation shows that the lower bound decreases from $\sqrt{\frac{(d_1-1)(d_2-1)}{d_1d_2}}$ to $\frac{1}{\sqrt{d}}$ when $N$ increases from $d_1+d_2$ to $d$. Note the agreement with the relation between relative volume and number of states in an AES. 
\end{proof}

We also conclude the existence of AES with respect to any partition of the system, see Definition \ref{Defi3}(ii).

\begin{coro}
Denote by $\{\xi_i \}_{i=1}^d$ an orthonormal basis of~$\bbC^{d}$, and by $P$ the set of all partitions of~$\bbC^{d}$. Furthermore, for any partition $p \in P$, denote by $\{N_p,a_{\text{min},\, p}\}$ the values of $\{N,a_{\text{min}}\}$ determined by Theorem \ref{thm4}. Then the set of $N:=\max_{p\in P} N_p$ states
\begin{align}
\begin{split}
    \ket{\phi_1} &= \ket{\xi_1}, \\
    \ket{\phi_i} &= a\ket{\xi_{1}}+\sqrt{1-a^2}\ket{\xi_{i}},\;i=2,\ldots,N
\end{split}
\end{align} forms an AES with respect to any partition according to Definition \ref{Defi3}, for $a \in (A,1)$, where $A= \max_{p\in P} a_{\text{min},p}$. 
\end{coro}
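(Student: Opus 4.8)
\textit{Proof proposal.} The plan is to obtain the statement directly from Theorem \ref{thm4} combined with the monotonicity of the AES property under enlargement of a set, which was already noted above: if $N_1$ states form an AES, then any set of $N_2\geq N_1$ states containing them does so too. First I would record that the set $P$ of partitions of $\mathbb{C}^d$ is finite, since it is in bijection with the finitely many ways of grouping the prime factors of $d$; hence $N=\max_{p\in P}N_p$ and $A=\max_{p\in P}a_{\text{min},p}$ are well defined and attained.

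Next, fix an arbitrary partition $p=(d_1,\ldots,d_k)\in P$ and any parameter $a\in(A,1)\subseteq(a_{\text{min},p},1)$. The key observation is that the first $N_p$ vectors of the displayed $N$-state set, all built from the single common basis $\{\xi_i\}_{i=1}^d$, are
\[
\ket{\phi_1}=\ket{\xi_1},\qquad \ket{\phi_i}=a\ket{\xi_1}+\sqrt{1-a^2}\,\ket{\xi_i},\quad i=2,\ldots,N_p,
\]
which is exactly the family constructed in Theorem \ref{thm4} for the partition $p$ with this basis and this value of $a$. The AES property established there does not depend on which ordering of $(d_1,\ldots,d_k)$ is used to certify it, so we may use whichever ordering realizes $a_{\text{min},p}$; since $a>a_{\text{min},p}$, Theorem \ref{thm4} guarantees that these $N_p$ states form an AES with respect to the $p$-partition.

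Finally, because $N\geq N_p$, the full $N$-state set contains this $N_p$-state AES as a subset, so for every unitary $U\in SU(d)$ at least one image $U\ket{\phi_i}$ (indeed one with $i\leq N_p$) is entangled with respect to $p$. As $p\in P$ was arbitrary and $a\in(A,1)$ was chosen once and for all, the set is absolutely entangled with respect to every partition, i.e.\ it is an AES in the sense of Definition \ref{Defi3}(ii) for all $a\in(A,1)$.

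The argument is essentially bookkeeping, so I do not expect a genuine obstacle; the only points needing a moment's care are (i) checking that the truncation of the $N$-state set to its first $N_p$ states literally coincides with the Theorem \ref{thm4} construction for $p$ — which is precisely why the corollary insists on a single fixed basis $\{\xi_i\}$ and a single parameter $a$ rather than partition-dependent data — and (ii) the finiteness of $P$, which is what makes the two maxima meaningful.
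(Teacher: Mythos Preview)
Your proposal is correct and matches the paper's approach: the paper presents this corollary as an immediate consequence of Theorem~\ref{thm4} without spelling out a separate proof, and your argument is precisely the natural bookkeeping that makes the deduction explicit (truncate to the first $N_p$ states, invoke Theorem~\ref{thm4} since $a>A\geq a_{\text{min},p}$, then enlarge the set). The points you flag as needing care --- finiteness of $P$ and the fact that the truncated family literally coincides with the Theorem~\ref{thm4} construction because a single fixed basis and parameter are used --- are exactly the right ones, and there is no gap.
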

This result guarantees that, given any $U$ and any partition $P$, there exists at least one entangled state. An interesting open question is whether one can find an AES, such that, for every $U$, there exists a state that is entangled according to all partitions. This would arguably be the strongest form of genuine multipartite entanglement.


\section{Conclusion}
\label{sec:concl}

This paper dealt with absolutely entangled sets (AESs) of states, a notion first introduced in \cite{cai2020entanglement}. We focused on sets of pure states.

We first presented a number of results for bipartitions $\bbC^{d}=\bbC^{d_1}\otimes\bbC^{d_2}$: a sufficient criterion to detect AES for sets of two-qubit states (Theorem \ref{thm1}), an AES with the maximal number of states that remain entangled under any global unitary (Theorem \ref{thm2}), and a characterisation of the typicality of AES according to the Haar measure (Theorem \ref{thm3new}). 

Then we extended the notion of AES to the multipartite regime. Given a k-partition $\bbC^{d}=\bbC^{d_1}\otimes\bbC^{d_2}\otimes...\otimes\bbC^{d_k}$, we proved that one needs at least $\max{(d_1,d_2,...,d_{k})}+2$ for an AES (Theorem \ref{thm3text}). We constructed an AES with $d_1+d_2+...+d_{k}-k+2$ states (Theorem \ref{thm4}). These two bounds coincide only for the bipartite case ($k=2$) and $\bbC^{d}=\bbC^{2}\otimes\bbC^{d'}$. We also gave a construction of an AES with respect to any possible partition of $\bbC^{d}$. 

An interesting open question is to characterize minimal AES, i.e. find the minimal number of states for an AES in $\bbC^{d}$ given a $k$-partition. As just mentioned, such minimal AES are only known in the bipartite case, and with one subsystems being a qubit. A subsequent question is to find minimal AES such that a minimum number of states remains entangled in any reference frame. 

The study of AES for \textit{mixed states} remains mostly open. Clearly, starting from an AES of pure states, and adding noise to each state, leads to an AES with mixed states, given the amount of added noise is not too large (see \cite{cai2020entanglement} for examples). However, an intriguing question is to determine the minimal size of an AES with mixed states. Notably, it might be possible that smaller AES can be constructed using mixed, compared to the case of pure states, for given Hilbert space dimension. In particular, we do not know if the result of Theorem \ref{thm3text} holds for mixed states as well: so far we could not rule out the possibility that, for $d=4$, three or even two mixed states could constitute an AES \footnote{We can nevertheless prove that a set consisting of one pure and one mixed state cannot be AES. This is because the rank 4 and rank 3 maximally mixed states are always separable upon any global unitary \cite{zyczkowski1998volume}, so we can reduce the mixed state to rank 2. Adding the pure state, we are looking at three pure states in total, which can always be made separable.}. 


More generally, it would be interesting to investigate the relevance of AES in the context of quantum information. One area that certainly connects to AES is that of quantum resources theories. In particular, recent works have started developing frameworks for composite quantum resources \cite{Ducuara_2020}, for instance sets of states featuring quantum coherence in every reference frame \cite{Designolle}. 

Also, in earlier works by Fuchs et al. \cite{fuchs2004quantumness,audenaert2003multiplicativity}, a task was proposed where an eavesdropper tries to reproduce an unknown ensemble according to some measurements on the states, and the lower bound of the average fidelity with respect to all ensembles that a Hilbert space can produce is defined as the quantumness of this Hilbert space. It is proved that some ensembles corresponding to AES can have higher quantumness than all those corresponding to non-AES (the lower is the average fidelity, the higher is the quantumness). Thus, the quantification of entanglement in a AES could be related to some fundamental structures of high-dimensional Hilbert space. 

From a broader perspective, our results are also relevant in the context of linear algebra. In the language of linear algebra, the problem of AES reads as follows (for bipartition $d=d_1d_2$): find a $N\times N$ Gram matrix of rank $r\leq d$, that cannot be written as Hadamard product of two Gram matrices of ranks $r_1\leq d_1$ and $r_2\leq d_2$. In the mathematical literature, to the best of our knowledge, only the work of Lovitz in 2019 \cite{lovitz2019decomposable} has raised this question, also inspired by a problem in quantum information. We hope to see, and to contribute to, the development of more powerful mathematical tools to address these questions.

\section*{Acknowledgements}
We thank Benjamin Lovitz, Jean-Daniel Bancal and Jakub Czartowski for discussions. This research is supported by the National Research Foundation and the Ministry of Education, Singapore, under the Research Centres of Excellence programme, and the National Science Center in Poland under the Maestro grant number DEC-2015/18/A/ST2/00274. We also acknowledge financial support from the Swiss National Science Foundation (project 2000021 192244/1 and NCCR SwissMap). 


\bibliography{ref}

\begin{thebibliography}{26}%
\makeatletter
\providecommand \@ifxundefined [1]{%
 \@ifx{#1\undefined}
}%
\providecommand \@ifnum [1]{%
 \ifnum #1\expandafter \@firstoftwo
 \else \expandafter \@secondoftwo
 \fi
}%
\providecommand \@ifx [1]{%
 \ifx #1\expandafter \@firstoftwo
 \else \expandafter \@secondoftwo
 \fi
}%
\providecommand \natexlab [1]{#1}%
\providecommand \enquote  [1]{``#1''}%
\providecommand \bibnamefont  [1]{#1}%
\providecommand \bibfnamefont [1]{#1}%
\providecommand \citenamefont [1]{#1}%
\providecommand \href@noop [0]{\@secondoftwo}%
\providecommand \href [0]{\begingroup \@sanitize@url \@href}%
\providecommand \@href[1]{\@@startlink{#1}\@@href}%
\providecommand \@@href[1]{\endgroup#1\@@endlink}%
\providecommand \@sanitize@url [0]{\catcode `\\12\catcode `\$12\catcode
  `\&12\catcode `\#12\catcode `\^12\catcode `\_12\catcode `\%12\relax}%
\providecommand \@@startlink[1]{}%
\providecommand \@@endlink[0]{}%
\providecommand \url  [0]{\begingroup\@sanitize@url \@url }%
\providecommand \@url [1]{\endgroup\@href {#1}{\urlprefix }}%
\providecommand \urlprefix  [0]{URL }%
\providecommand \Eprint [0]{\href }%
\providecommand \doibase [0]{http://dx.doi.org/}%
\providecommand \selectlanguage [0]{\@gobble}%
\providecommand \bibinfo  [0]{\@secondoftwo}%
\providecommand \bibfield  [0]{\@secondoftwo}%
\providecommand \translation [1]{[#1]}%
\providecommand \BibitemOpen [0]{}%
\providecommand \bibitemStop [0]{}%
\providecommand \bibitemNoStop [0]{.\EOS\space}%
\providecommand \EOS [0]{\spacefactor3000\relax}%
\providecommand \BibitemShut  [1]{\csname bibitem#1\endcsname}%
\let\auto@bib@innerbib\@empty
\bibitem [{\citenamefont {Zanardi}(2001)}]{zanardi2001}%
  \BibitemOpen
  \bibfield  {author} {\bibinfo {author} {\bibfnamefont {P.}~\bibnamefont
  {Zanardi}},\ }\href {\doibase 10.1103/PhysRevLett.87.077901} {\bibfield
  {journal} {\bibinfo  {journal} {Phys. Rev. Lett.}\ }\textbf {\bibinfo
  {volume} {87}},\ \bibinfo {pages} {077901} (\bibinfo {year}
  {2001})}\BibitemShut {NoStop}%
\bibitem [{\citenamefont {Zanardi}\ \emph {et~al.}(2004)\citenamefont
  {Zanardi}, \citenamefont {Lidar},\ and\ \citenamefont {Lloyd}}]{zanardi2004}%
  \BibitemOpen
  \bibfield  {author} {\bibinfo {author} {\bibfnamefont {P.}~\bibnamefont
  {Zanardi}}, \bibinfo {author} {\bibfnamefont {D.~A.}\ \bibnamefont {Lidar}},
  \ and\ \bibinfo {author} {\bibfnamefont {S.}~\bibnamefont {Lloyd}},\ }\href
  {\doibase 10.1103/PhysRevLett.92.060402} {\bibfield  {journal} {\bibinfo
  {journal} {Phys. Rev. Lett.}\ }\textbf {\bibinfo {volume} {92}},\ \bibinfo
  {pages} {060402} (\bibinfo {year} {2004})}\BibitemShut {NoStop}%
\bibitem [{\citenamefont {Harshman}\ and\ \citenamefont
  {Ranade}(2011)}]{harshman2011}%
  \BibitemOpen
  \bibfield  {author} {\bibinfo {author} {\bibfnamefont {N.~L.}\ \bibnamefont
  {Harshman}}\ and\ \bibinfo {author} {\bibfnamefont {K.~S.}\ \bibnamefont
  {Ranade}},\ }\href {\doibase 10.1103/PhysRevA.84.012303} {\bibfield
  {journal} {\bibinfo  {journal} {Phys. Rev. A}\ }\textbf {\bibinfo {volume}
  {84}},\ \bibinfo {pages} {012303} (\bibinfo {year} {2011})}\BibitemShut
  {NoStop}%
\bibitem [{\citenamefont {Lovitz}(2019)}]{lovitz2019decomposable}%
  \BibitemOpen
  \bibfield  {author} {\bibinfo {author} {\bibfnamefont {B.}~\bibnamefont
  {Lovitz}},\ }\href@noop {} {\bibfield  {journal} {\bibinfo  {journal} {Linear
  and Multilinear Algebra}\ ,\ \bibinfo {pages} {1}} (\bibinfo {year}
  {2019})}\BibitemShut {NoStop}%
\bibitem [{\citenamefont {Cai}\ \emph {et~al.}(2021)\citenamefont {Cai},
  \citenamefont {Yu}, \citenamefont {Jayachandran}, \citenamefont {Brunner},
  \citenamefont {Scarani},\ and\ \citenamefont {Bancal}}]{cai2020entanglement}%
  \BibitemOpen
  \bibfield  {author} {\bibinfo {author} {\bibfnamefont {Y.}~\bibnamefont
  {Cai}}, \bibinfo {author} {\bibfnamefont {B.}~\bibnamefont {Yu}}, \bibinfo
  {author} {\bibfnamefont {P.}~\bibnamefont {Jayachandran}}, \bibinfo {author}
  {\bibfnamefont {N.}~\bibnamefont {Brunner}}, \bibinfo {author} {\bibfnamefont
  {V.}~\bibnamefont {Scarani}}, \ and\ \bibinfo {author} {\bibfnamefont
  {J.-D.}\ \bibnamefont {Bancal}},\ }\href {\doibase
  10.1103/PhysRevA.103.052432} {\bibfield  {journal} {\bibinfo  {journal}
  {Phys. Rev. A}\ }\textbf {\bibinfo {volume} {103}},\ \bibinfo {pages}
  {052432} (\bibinfo {year} {2021})}\BibitemShut {NoStop}%
\bibitem [{\citenamefont {{Welch}}(1974)}]{Welch}%
  \BibitemOpen
  \bibfield  {author} {\bibinfo {author} {\bibfnamefont {L.}~\bibnamefont
  {{Welch}}},\ }\href {\doibase 10.1109/TIT.1974.1055219} {\bibfield  {journal}
  {\bibinfo  {journal} {IEEE Transactions on Information Theory}\ }\textbf
  {\bibinfo {volume} {20}},\ \bibinfo {pages} {397} (\bibinfo {year}
  {1974})}\BibitemShut {NoStop}%
\bibitem [{\citenamefont {Klappenecker}\ and\ \citenamefont
  {Roetteler}(2005)}]{klappenecker2005mutually}%
  \BibitemOpen
  \bibfield  {author} {\bibinfo {author} {\bibfnamefont {A.}~\bibnamefont
  {Klappenecker}}\ and\ \bibinfo {author} {\bibfnamefont {M.}~\bibnamefont
  {Roetteler}},\ }in\ \href
  {https://www.microsoft.com/en-us/research/publication/mutually-unbiased-bases-complex-projective-2-designs/}
  {\emph {\bibinfo {booktitle} {Proceedings 2005 IEEE International Symposium
  on Information Theory (ISIT 2005), Adelaide, Australia}}}\ (\bibinfo {year}
  {2005})\ pp.\ \bibinfo {pages} {1740--1744}\BibitemShut {NoStop}%
\bibitem [{\citenamefont {Lubkin}(1978)}]{Lubkin_1978}%
  \BibitemOpen
  \bibfield  {author} {\bibinfo {author} {\bibfnamefont {E.}~\bibnamefont
  {Lubkin}},\ }\href@noop {} {\bibfield  {journal} {\bibinfo  {journal}
  {Journal of Mathematical Physics}\ }\textbf {\bibinfo {volume} {19}},\
  \bibinfo {pages} {1028} (\bibinfo {year} {1978})}\BibitemShut {NoStop}%
\bibitem [{\citenamefont {Zyczkowski}\ and\ \citenamefont
  {Sommers}(2001)}]{Zyczkowski_2001}%
  \BibitemOpen
  \bibfield  {author} {\bibinfo {author} {\bibfnamefont {K.}~\bibnamefont
  {Zyczkowski}}\ and\ \bibinfo {author} {\bibfnamefont {H.-J.}\ \bibnamefont
  {Sommers}},\ }\href {\doibase 10.1088/0305-4470/34/35/335} {\bibfield
  {journal} {\bibinfo  {journal} {Journal of Physics A: Mathematical and
  General}\ }\textbf {\bibinfo {volume} {34}},\ \bibinfo {pages} {7111–7125}
  (\bibinfo {year} {2001})}\BibitemShut {NoStop}%
\bibitem [{\citenamefont {Wootters}\ and\ \citenamefont
  {Fields}(1989)}]{Wooters_1989}%
  \BibitemOpen
  \bibfield  {author} {\bibinfo {author} {\bibfnamefont {W.~K.}\ \bibnamefont
  {Wootters}}\ and\ \bibinfo {author} {\bibfnamefont {B.~D.}\ \bibnamefont
  {Fields}},\ }\href {\doibase https://doi.org/10.1016/0003-4916(89)90322-9}
  {\bibfield  {journal} {\bibinfo  {journal} {Annals of Physics}\ }\textbf
  {\bibinfo {volume} {191}},\ \bibinfo {pages} {363 } (\bibinfo {year}
  {1989})}\BibitemShut {NoStop}%
\bibitem [{\citenamefont {Wootters}(1987)}]{WOOTTERS19871}%
  \BibitemOpen
  \bibfield  {author} {\bibinfo {author} {\bibfnamefont {W.~K.}\ \bibnamefont
  {Wootters}},\ }\href {\doibase https://doi.org/10.1016/0003-4916(87)90176-X}
  {\bibfield  {journal} {\bibinfo  {journal} {Annals of Physics}\ }\textbf
  {\bibinfo {volume} {176}},\ \bibinfo {pages} {1} (\bibinfo {year}
  {1987})}\BibitemShut {NoStop}%
\bibitem [{\citenamefont {Bengtsson}(2007)}]{Bengtsson_2007}%
  \BibitemOpen
  \bibfield  {author} {\bibinfo {author} {\bibfnamefont {I.}~\bibnamefont
  {Bengtsson}},\ }\href {\doibase 10.1063/1.2713445} {\bibfield  {journal}
  {\bibinfo  {journal} {AIP Conference Proceedings}\ } (\bibinfo {year}
  {2007}),\ 10.1063/1.2713445}\BibitemShut {NoStop}%
\bibitem [{\citenamefont {Bandyopadhyay}\ \emph {et~al.}(2001)\citenamefont
  {Bandyopadhyay}, \citenamefont {Boykin}, \citenamefont {Roychowdhury},\ and\
  \citenamefont {Vatan}}]{MUBalg}%
  \BibitemOpen
  \bibfield  {author} {\bibinfo {author} {\bibfnamefont {S.}~\bibnamefont
  {Bandyopadhyay}}, \bibinfo {author} {\bibfnamefont {P.}~\bibnamefont
  {Boykin}}, \bibinfo {author} {\bibfnamefont {V.}~\bibnamefont
  {Roychowdhury}}, \ and\ \bibinfo {author} {\bibfnamefont {F.}~\bibnamefont
  {Vatan}},\ }\href@noop {} {\bibfield  {journal} {\bibinfo  {journal}
  {Algorithmica (New York)}\ }\textbf {\bibinfo {volume} {34}} (\bibinfo {year}
  {2001})}\BibitemShut {NoStop}%
\bibitem [{\citenamefont {Brierley}\ \emph {et~al.}(2010)\citenamefont
  {Brierley}, \citenamefont {Weigert},\ and\ \citenamefont
  {Bengtsson}}]{MUBs25}%
  \BibitemOpen
  \bibfield  {author} {\bibinfo {author} {\bibfnamefont {S.}~\bibnamefont
  {Brierley}}, \bibinfo {author} {\bibfnamefont {S.}~\bibnamefont {Weigert}}, \
  and\ \bibinfo {author} {\bibfnamefont {I.}~\bibnamefont {Bengtsson}},\
  }\href@noop {} {\bibfield  {journal} {\bibinfo  {journal} {Quantum Info.
  Comput.}\ }\textbf {\bibinfo {volume} {10}},\ \bibinfo {pages} {803–820}
  (\bibinfo {year} {2010})}\BibitemShut {NoStop}%
\bibitem [{\citenamefont {Lawrence}(2011)}]{LocalMubs}%
  \BibitemOpen
  \bibfield  {author} {\bibinfo {author} {\bibfnamefont {J.}~\bibnamefont
  {Lawrence}},\ }\href {\doibase 10.1103/PhysRevA.84.022338} {\bibfield
  {journal} {\bibinfo  {journal} {Phys. Rev. A}\ }\textbf {\bibinfo {volume}
  {84}},\ \bibinfo {pages} {022338} (\bibinfo {year} {2011})}\BibitemShut
  {NoStop}%
\bibitem [{\citenamefont {Czartowski}\ \emph {et~al.}(2018)\citenamefont
  {Czartowski}, \citenamefont {Goyeneche},\ and\ \citenamefont
  {Życzkowski}}]{Czartowski_2018}%
  \BibitemOpen
  \bibfield  {author} {\bibinfo {author} {\bibfnamefont {J.}~\bibnamefont
  {Czartowski}}, \bibinfo {author} {\bibfnamefont {D.}~\bibnamefont
  {Goyeneche}}, \ and\ \bibinfo {author} {\bibfnamefont {K.}~\bibnamefont
  {Życzkowski}},\ }\href {\doibase 10.1088/1751-8121/aac973} {\bibfield
  {journal} {\bibinfo  {journal} {Journal of Physics A: Mathematical and
  Theoretical}\ }\textbf {\bibinfo {volume} {51}},\ \bibinfo {pages} {305302}
  (\bibinfo {year} {2018})}\BibitemShut {NoStop}%
\bibitem [{\citenamefont {Li}\ and\ \citenamefont
  {Yung}(2020)}]{li2020absolutely}%
  \BibitemOpen
  \bibfield  {author} {\bibinfo {author} {\bibfnamefont {M.-S.}\ \bibnamefont
  {Li}}\ and\ \bibinfo {author} {\bibfnamefont {M.-H.}\ \bibnamefont {Yung}},\
  }\href@noop {} {\bibfield  {journal} {\bibinfo  {journal} {arXiv preprint
  arXiv:2011.04903}\ } (\bibinfo {year} {2020})}\BibitemShut {NoStop}%
\bibitem [{Note1()}]{Note1}%
  \BibitemOpen
  \bibinfo {note} {For example, we get a total entropy 0.13 when $a_1=0.2922 -
  0.0351i$, $a_2=-0.7764 + 0.5573i$, $b_1=-0.0595 + 0.4964i$, $b_2=0.5150 +
  0.2846i$, $b_3=-0.6334 - 0.0518i$, $c_1=0.6996 + 0.1303i$, $c_2=0.0494 +
  0.0451i$, $c_3= -0.2643 - 0.6475i$}\BibitemShut {NoStop}%
\bibitem [{\citenamefont {Bengtsson}\ and\ \citenamefont
  {{\.Z}yczkowski}(2017)}]{bengtsson2017geometry}%
  \BibitemOpen
  \bibfield  {author} {\bibinfo {author} {\bibfnamefont {I.}~\bibnamefont
  {Bengtsson}}\ and\ \bibinfo {author} {\bibfnamefont {K.}~\bibnamefont
  {{\.Z}yczkowski}},\ }\href@noop {} {\emph {\bibinfo {title} {Geometry of
  quantum states: an introduction to quantum entanglement}}}\ (\bibinfo
  {publisher} {Cambridge university press},\ \bibinfo {year}
  {2017})\BibitemShut {NoStop}%
\bibitem [{Note2()}]{Note2}%
  \BibitemOpen
  \bibinfo {note} {We set the threshold for detectable entanglement to
  $10^{-11}$, since the entropy formula was yielding positive results at the
  level $10^{-12}$ when fed with product states. A posteriori, we checked that
  only $0.2\%$ of the 8500 detected cases had an entropy in the range $10^{-8}$
  to $10^{-11}$.}\BibitemShut {Stop}%
\bibitem [{Note3()}]{Note3}%
  \BibitemOpen
  \bibinfo {note} {We can nevertheless prove that a set consisting of one pure
  and one mixed state cannot be AES. This is because the rank 4 and rank 3
  maximally mixed states are always separable upon any global unitary \cite
  {zyczkowski1998volume}, so we can reduce the mixed state to rank 2. Adding
  the pure state, we are looking at three pure states in total, which can
  always be made separable.}\BibitemShut {Stop}%
\bibitem [{\citenamefont {Ducuara}\ \emph {et~al.}(2020)\citenamefont
  {Ducuara}, \citenamefont {Lipka-Bartosik},\ and\ \citenamefont
  {Skrzypczyk}}]{Ducuara_2020}%
  \BibitemOpen
  \bibfield  {author} {\bibinfo {author} {\bibfnamefont {A.~F.}\ \bibnamefont
  {Ducuara}}, \bibinfo {author} {\bibfnamefont {P.}~\bibnamefont
  {Lipka-Bartosik}}, \ and\ \bibinfo {author} {\bibfnamefont {P.}~\bibnamefont
  {Skrzypczyk}},\ }\href {\doibase 10.1103/physrevresearch.2.033374} {\bibfield
   {journal} {\bibinfo  {journal} {Physical Review Research}\ }\textbf
  {\bibinfo {volume} {2}} (\bibinfo {year} {2020}),\
  10.1103/physrevresearch.2.033374}\BibitemShut {NoStop}%
\bibitem [{\citenamefont {Designolle}\ \emph {et~al.}(2021)\citenamefont
  {Designolle}, \citenamefont {Uola}, \citenamefont {Luoma},\ and\
  \citenamefont {Brunner}}]{Designolle}%
  \BibitemOpen
  \bibfield  {author} {\bibinfo {author} {\bibfnamefont {S.}~\bibnamefont
  {Designolle}}, \bibinfo {author} {\bibfnamefont {R.}~\bibnamefont {Uola}},
  \bibinfo {author} {\bibfnamefont {K.}~\bibnamefont {Luoma}}, \ and\ \bibinfo
  {author} {\bibfnamefont {N.}~\bibnamefont {Brunner}},\ }\href {\doibase
  10.1103/PhysRevLett.126.220404} {\bibfield  {journal} {\bibinfo  {journal}
  {Phys. Rev. Lett.}\ }\textbf {\bibinfo {volume} {126}},\ \bibinfo {pages}
  {220404} (\bibinfo {year} {2021})}\BibitemShut {NoStop}%
\bibitem [{\citenamefont {Fuchs}(2004)}]{fuchs2004quantumness}%
  \BibitemOpen
  \bibfield  {author} {\bibinfo {author} {\bibfnamefont {C.~A.}\ \bibnamefont
  {Fuchs}},\ }\href@noop {} {\bibfield  {journal} {\bibinfo  {journal} {Quantum
  Information \& Computation}\ }\textbf {\bibinfo {volume} {4}},\ \bibinfo
  {pages} {467} (\bibinfo {year} {2004})}\BibitemShut {NoStop}%
\bibitem [{\citenamefont {Audenaert}\ \emph {et~al.}(2004)\citenamefont
  {Audenaert}, \citenamefont {Fuchs}, \citenamefont {King},\ and\ \citenamefont
  {Winter}}]{audenaert2003multiplicativity}%
  \BibitemOpen
  \bibfield  {author} {\bibinfo {author} {\bibfnamefont {K.}~\bibnamefont
  {Audenaert}}, \bibinfo {author} {\bibfnamefont {C.}~\bibnamefont {Fuchs}},
  \bibinfo {author} {\bibfnamefont {C.}~\bibnamefont {King}}, \ and\ \bibinfo
  {author} {\bibfnamefont {A.}~\bibnamefont {Winter}},\ }\href@noop {}
  {\bibfield  {journal} {\bibinfo  {journal} {Quantum Inf. Comput.}\ }\textbf
  {\bibinfo {volume} {4}},\ \bibinfo {pages} {1} (\bibinfo {year}
  {2004})}\BibitemShut {NoStop}%
\bibitem [{\citenamefont {{\.Z}yczkowski}\ \emph {et~al.}(1998)\citenamefont
  {{\.Z}yczkowski}, \citenamefont {Horodecki}, \citenamefont {Sanpera},\ and\
  \citenamefont {Lewenstein}}]{zyczkowski1998volume}%
  \BibitemOpen
  \bibfield  {author} {\bibinfo {author} {\bibfnamefont {K.}~\bibnamefont
  {{\.Z}yczkowski}}, \bibinfo {author} {\bibfnamefont {P.}~\bibnamefont
  {Horodecki}}, \bibinfo {author} {\bibfnamefont {A.}~\bibnamefont {Sanpera}},
  \ and\ \bibinfo {author} {\bibfnamefont {M.}~\bibnamefont {Lewenstein}},\
  }\href@noop {} {\bibfield  {journal} {\bibinfo  {journal} {Physical Review
  A}\ }\textbf {\bibinfo {volume} {58}},\ \bibinfo {pages} {883} (\bibinfo
  {year} {1998})}\BibitemShut {NoStop}%
\end{thebibliography}%

\appendix
\section{Proof of Theorem \ref{thm2}}
\label{appthm2}

We start with a Lemma:

\begin{lemma}\label{lemma1}
Suppose $N$ is an arbitrary positive integer, with $N\geq 4$. There exist $N$ points on a unit sphere in $\mathbb{R}^3$ such that any four of them form a tetrahedron.
\end{lemma}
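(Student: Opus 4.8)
The plan is to prove a genericity statement: a "random" choice of $N$ points on the unit sphere $S^2 \subset \mathbb{R}^3$ has the property that no four of them are coplanar, and "four points in non-coplanar position" is exactly the same as "four points forming a (non-degenerate) tetrahedron." So the statement reduces to showing that the bad set --- the set of configurations where some four of the $N$ points lie in a common plane --- is a proper closed subset (indeed a measure-zero subset) of $(S^2)^{\times N}$, hence its complement is nonempty.

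Concretely, first I would fix notation: for four points $x_1,x_2,x_3,x_4 \in \mathbb{R}^3$, coplanarity is equivalent to the vanishing of the $4\times 4$ determinant
\begin{equation}
\det\begin{pmatrix} 1 & 1 & 1 & 1 \\ x_1 & x_2 & x_3 & x_4 \end{pmatrix} = 0,
\end{equation}
equivalently $(x_2 - x_1)\cdot\big((x_3 - x_1)\times(x_4 - x_1)\big) = 0$. For points constrained to the sphere, this is a nontrivial real-analytic (in fact polynomial) equation in the coordinates, so for each fixed 4-element index subset $\{i,j,k,l\}$ the bad locus $Z_{ijkl} \subset (S^2)^{\times N}$ is a closed subset, and it is a proper subset provided we can exhibit at least one $N$-tuple on the sphere for which those four particular points are not coplanar --- which is trivial (e.g. take the four points to be four vertices of a regular tetrahedron inscribed in the sphere, and the remaining $N-4$ points arbitrary). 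Since a proper real-analytic subvariety of a connected real-analytic manifold has measure zero, each $Z_{ijkl}$ has measure zero. Taking the union over the finitely many $\binom{N}{4}$ choices of index set still gives a measure-zero (in particular, proper) subset, so its complement is nonempty; any point in the complement gives $N$ points on $S^2$ with every four forming a tetrahedron.

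Alternatively, and perhaps more in the spirit of an elementary self-contained argument, one can proceed by explicit induction on $N$. For the base case $N = 4$, pick any four points not all coplanar (e.g. the regular tetrahedron). For the inductive step, suppose $x_1,\ldots,x_N$ on $S^2$ already have the property that any four form a tetrahedron; I want to add a point $x_{N+1}$ avoiding, for every 3-element subset $\{i,j,k\}$, the plane $\Pi_{ijk}$ through $x_i,x_j,x_k$. Each $\Pi_{ijk}$ meets $S^2$ in a circle (possibly degenerate to a point or empty, but generically a circle), hence in a set of measure zero on the sphere; the union of the $\binom{N}{3}$ such circles is still measure zero, so there is a point $x_{N+1} \in S^2$ avoiding all of them, and then any four of $x_1,\ldots,x_{N+1}$ that include $x_{N+1}$ are non-coplanar by construction, while any four among $x_1,\ldots,x_N$ are non-coplanar by the inductive hypothesis.

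The only real point requiring care --- the "main obstacle," though it is minor --- is making sure the relevant bad loci are genuinely proper (not all of the configuration space): one must check that coplanarity of four points on the sphere is not automatic, which is immediate from the regular-tetrahedron example, and that a plane intersects the $2$-sphere in a measure-zero set, which is clear since the intersection is a circle, a point, or empty. Everything else is routine: finitely many measure-zero sets have measure-zero union, and the complement of a measure-zero set in $S^2$ (or in $(S^2)^{\times N}$) is nonempty. I would present the inductive version as it is the cleanest and uses only elementary facts.
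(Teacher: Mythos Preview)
Your proposal is correct, and your inductive argument is essentially the paper's own proof: the paper likewise observes that three points on the sphere determine a circle, and that the sphere is not the union of the finitely many circles coming from triples among the first $N-1$ points, so an $N$-th point can always be added --- it simply phrases this as a contradiction rather than a direct induction. Your first (analytic-subvariety/measure-zero) argument is a mild repackaging of the same idea at the level of the whole configuration space; either version is fine, and if anything your inductive write-up is a bit more carefully stated than the paper's.
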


\textit{Proof.} Any three distinct points on a unit sphere in $\mathbb{R}^3$ determine a plane that intersects the sphere along a circle. Any point on the sphere that lies outside the circle, along with the three points on the circle, would form a tetrahedron. Now assume that the claim of the lemma is false; that is, for any set of $N$ points on a unit sphere, there always exist four points such that they do not form a tetrahedron. Such points must necessarily lie on a circle. This implies that once $ \binom{N-1}{3}$ circles (or less) are determined by $N-1$ points, there remains no choice to add an $N^{th}$ point outside these circles. In other words, this means that the sphere is fully covered by these $\binom{N-1}{3}$ circles, which results in a contradiction as a sphere is not a finite union of circles ($N$ is finite).
\qed\vspace{\baselineskip}

We now prove the theorem. According to Lemma \ref{lemma1}, we can always construct N points on the unit sphere such that any 4 of them form a tetrahedron. Then we obtain N unit vectors $\{\mathbf{v}_{i}=(v_{i1},v_{i2},v_{i3})\}$ from the center of the ball to the N points on the sphere, and construct the N states as-
\begin{align}
\ket{\phi_{i}}&=a\ket{\xi_{0}}+\sqrt{1-a^2}(v_{i1}\ket{\xi_{1}}+v_{i2}\ket{\xi_{2}}+v_{i3}\ket{\xi_{3}}),
\end{align}
where $\{\ket{\xi_{i}}\}$ is an orthonormal basis in $\mathbb{C}^{4}$, $a\in(0,1)$. Now we show that any 4 states in this set form an AES when $a$ is large enough. Let the set of 4 arbitrary states be denoted $\{\ket{\phi_i}\}_{i=0,1,2,3}$. Then these states can be written as,
\begin{align}
    \begin{split}
    \ket{\phi_{0}}&=\ket{\xi_{0}}',\\
    \ket{\phi_{1}}&=c_{10}\ket{\xi_{0}}'+c_{11}\ket{\xi_{1}}',\\
    \ket{\phi_{2}}&=c_{20}\ket{\xi_{0}}'+c_{21}\ket{\xi_{1}}'+c_{22}\ket{\xi_{2}}',\\
    \ket{\phi_{3}}&=c_{30}\ket{\xi_{0}}'+c_{31}\ket{\xi_{1}}'+c_{32}\ket{\xi_{2}}'+c_{33}\ket{\xi_{3}}',
    \end{split}
    \end{align}
where $\{\ket{\xi_{i}}'\}$ is some orthonormal basis. Also let $\mathbf{u}_{1}=\mathbf{v}_{1}-\mathbf{v}_{0}$, $\mathbf{u}_{2}=\mathbf{v}_{2}-\mathbf{v}_{0}$, $\mathbf{u}_{3}=\mathbf{v}_{3}-\mathbf{v}_{0}$, (recall that $\mathbf{v}_{i}$ are the corresponding unit vectors of $\phi_{i}$), where $u_{ij}=v_{ij}-v_{0j}$ denotes the $j$th component of vector $\mathbf{u}_{i}$. Let the states $\ket{\mathbf{u}_{i}}$ (with vanishing coefficients for $\ket{\xi_{0}}$) be
\begin{align}\label{defui}
\ket{\mathbf{u}_{i}}=\frac{u_{i1}\ket{\xi_{1}}+u_{i2}\ket{\xi_{2}}+u_{i3}\ket{\xi_{3}}}{\sqrt{u_{i1}^2+u_{i2}^2+u_{i3}^2}},
\end{align}
then they can be written in the form
\begin{align}\label{defui'}
    \begin{split}
\ket{\mathbf{u}_{1}}=&\ket{\mathbf{u}_{1}'}\\
\ket{\mathbf{u}_{2}}=&U_{21} \ket{\mathbf{u}_{1}'}+U_{22} \ket{\mathbf{u}_{2}'},\\
\ket{\mathbf{u}_{3}}=&U_{31} \ket{\mathbf{u}_{1}'}+U_{32} \ket{\mathbf{u}_{2}'}+U_{33} \ket{\mathbf{u}_{3}'},
\end{split}
    \end{align}
where $\ket{\mathbf{u}_{1}'}, \ket{\mathbf{u}_{2}'}, \ket{\mathbf{u}_{3}'}$ are form an orthonormal set, and $U_{ij}=\braket{\mathbf{u}_{j}'|\mathbf{u}_{i}}$. Since $\mathbf{u}_{1}$, $\mathbf{u}_{2}$, $\mathbf{u}_{3}$ form a linearly independent set (they correspond to three edges with a same vertex in a tetrahedron), $U_{22}$ and $U_{33}$ are nonzero. Now we prove that when $a$ is large enough (approaching 1), the value $\frac{c_{ij}}{\sqrt{1-|c_{i0}|^2}}$ $(i>0, j>0)$ approaches $U_{ij}$ (therefore $L$ would be approaching a nonzero value), and the states $\ket{\xi_{i}}'$ approach $\ket{\mathbf{u}_{i}'}$ $(i>0)$. To see this, let $a=\sqrt{1-\delta^2}$ then $\sqrt{1-a^2}=\delta$. We have,
\begin{equation}
c_{10}=\braket{\phi_{0}|\phi_{1}}=1-\delta^2(1-V_{10}),   
\end{equation}
where $V_{10}=(\mathbf{v}_0,\mathbf{v}_1)$ and
\begin{equation}
|c_{11}|=\sqrt{1-|c_{10}|^{2}}=\delta\sqrt{2(1-V_{10})-\delta^2(1-V_{10})^2}.  \end{equation}
We check the form of $\ket{\xi_{1}}'$. With some calculation we have
\begin{align}\label{xi1tilde}
\ket{\tilde{\xi_{1}}}'
&=\ket{\phi_{1}}-c_{10}\ket{\phi_{0}}\\
&=\sqrt{1-\delta^2}[\delta^2(1-V_{10})]\ket{\xi_{0}}
\nonumber \\
&+\delta[u_{11}+v_{01}\delta^2(1-V_{10})])\ket{\xi_{1}}
\nonumber \\
&+\delta[u_{12}+v_{02}\delta^2(1-V_{10})])\ket{\xi_{2}}
\nonumber \\
&+\delta[u_{13}+v_{03}\delta^2(1-V_{10})])\ket{\xi_{3}}
\nonumber 
\end{align}
where the tilde indicates that the state is unnormalized, and the normalization factor is
\begin{align}\label{normxi1}
|\ket{\phi_{1}}-c_{10}\ket{\phi_{0}}|=\delta\cdot\sqrt{u_{11}^2+u_{12}^2+u_{13}^2+O(\delta^2)}.
\end{align}
From Eqs.~\eqref{xi1tilde} and \eqref{normxi1} we can see that the dominating terms of $\ket{\xi_{1}}'$ are $u_{11}$, $u_{12}$ and $u_{13}$ when $\delta$ becomes small. Therefore for any $\epsilon>0$, we can always find a $\delta_{1}>0$ such that when $\delta<\delta_{1}$, 
\begin{align}
|\braket{\mathbf{u}_{1}'|\xi_{1}}'|>1-\epsilon.
\end{align}
Simply speaking, the state $\ket{\xi_{1}}'$ approaches $\ket{\mathbf{u}_{1}'}$ as $\delta$ tends to $0$.
Similarly, $|c_{20}|=\braket{\phi_{0}|\phi_{2}}=1-\delta^2(1-V_{20})$, and
\begin{align}
\frac{c_{21}}{\sqrt{1-|c_{20}|^2}}
&=\frac{\braket{\xi_{1}'|\phi_{2}}}{\sqrt{1-|c_{20}|^2}}
=\frac{\bra{\xi_{1}'}(\ket{\phi_{2}}-c_{20}\ket{\phi_{0}})}{|\ket{\phi_{2}}-c_{20}\ket{\phi_{0}}|}\nonumber \\
=K(\delta)^{-1}&\Bigg(\delta^2(1-\delta^2)(1-V_{10})(1-V_{20})
\nonumber \\
+\sum_{j=1}^3 [u_{2j}&+v_{0j}\delta^2(1-V_{20})]\cdot[u_{1j}+v_{0j}\delta^2(1-V_{10})]\Bigg),
\nonumber 
\end{align}
where 
\[
K(\delta)=\sqrt{(u_{11}^2+u_{12}^2+u_{13}^2)(u_{21}^2+u_{22}^2+u_{23}^2)+O(\delta^2)}.
\] 
Since we are interested in very small values of $\delta$, we can approximate
\begin{equation}
    \frac{c_{21}}{\sqrt{1-|c_{20}|^2}}=\braket{\mathbf{u}_{1}'|\mathbf{u}_{2}}+O(\delta^2)\,.
\end{equation}
More precisely, for any $\epsilon>0$, we can always find a $\delta_{2}>0$ such than when $\delta<\delta_{2}$,
\begin{equation}
|\frac{c_{21}}{\sqrt{1-|c_{20}|^2}}-U_{21}|<\epsilon.
\end{equation}
Similarly, it can be shown that states $\ket{\xi_2}'$, $\ket{\xi_3}'$ approach $\ket{\mathbf{u}_{2}'}$ and $\ket{\mathbf{u}_{3}'}$ respectively, and the terms $\frac{c_{22}}{\sqrt{1-|c_{20}|^2}}$, $\frac{c_{31}}{\sqrt{1-|c_{30}|^2}}$, $\frac{c_{32}}{\sqrt{1-|c_{30}|^2}}$, $\frac{c_{33}}{\sqrt{1-|c_{30}|^2}}$ approach $U_{22}$, $U_{31}$, $U_{32}$, $U_{33}$, respectively. Therefore, for any $\epsilon>0$, we can find $\delta'$ such that when $\delta<\delta'$,
\begin{equation}
|\frac{2}{L+1}-\frac{2}{L'+1}|< \epsilon,
\end{equation}
where 
\begin{align*}
L\;
&=1+\Bigg(\frac{|c_{21}|+\sqrt{1-|c_{20}|^2}}{|c_{22}|}\Bigg)^2 \\
&+\Bigg(\frac{|c_{31}|+|c_{32}|\cdot\frac{|c_{21}|+\sqrt{1-|c_{20}|^2}}{|c_{22}|}+\sqrt{1-|c_{30}|^2})}{|c_{33}|}\Bigg)^2,\\
L'\;
&=1+\Bigg(\frac{|U_{21}|+1}{|U_{22}|}\Bigg)^2\\
&+\Bigg(\frac{|U_{31}|}{|U_{33}|}+\frac{|U_{32}|}{|U_{33}|}\cdot\frac{|U_{21}|+1}{|U_{22}|}+\frac{1}{|U_{33}|}\Bigg)^2.
\end{align*}
\noindent
We can see that $L'$ is independent of $\delta$ ($a$), and since $\mathbf{u}_{i}$ are linearly independent, $|U_{ii}|\neq 0$. $L'$ will therefore be bounded by some constant, and $1-\frac{2}{L'+1}$ will take a constant value smaller than one, say $o$. Choose $\epsilon'>0$ such that $o+\epsilon'<1$, we can always find a bound $\delta'$ such that when $\delta<\delta'$,  $|1-\frac{2}{L+1}-o|< \epsilon'$ and $c=\textrm{min}_{i}\{|c_{i0}|\}=\textrm{min}_{i}|[1-\delta^2\cdot (1-V_{i0})]|>o+\epsilon'$ hold. Then we have
\begin{align}
\textrm{min}_{i}\{|c_{i0}|\}>1-\frac{2}{L+1},
\end{align}
showing that the 4 states form an AES. Since the same can be repeated for every 4 arbitrary states among the $N$ states, we obtain $\binom{N}{4}$ values for $\delta_i$ $(i=1,...,\binom{N}{4})$, and we just need to let $a$ is large enough such that $\sqrt{1-a^2}\leq \textrm{min}_{i}\{\delta_i\}$. Then we can ensure that there are at least $N-3$ entangled states with respect to any unitary operation, since any 4 states must contain at least 1 entangled state.

\section{Supplemental proof for theorem \ref{thm4}.}
\label{supthm3}
In this section we prove that using the separability criterion
\begin{equation}\label{ap3eq1}
|a_{1}|^{2}\cdot (\sum_{n=1}^{d_1-1}\sum_{j=2}^{d_{r_1}}|a_{n\cdot d_{r_1}+j}|^{2})=\sum_{n=1}^{d_1-1}|a_{n\cdot d_r+1}|^2\cdot \sum_{j=2}^{d_{r_1}}|a_{j}|^{2},
\end{equation}
we can obtain the result in Ineq.~\eqref{mteq4} of the main text, which is
\begin{equation}
     a\leq \sqrt{\frac{(d_1-1)S^{(1)}}{[I-(d_1-1)](I-S^{(1)})}}.    
\end{equation}
And we also give more details for the derivation of Ineq.~\eqref{mteq6}.

\textit{Proof.}
Applying separability condition \eqref{ap3eq1} to the $N-1$ transformed states $U\ket{\phi_i}$ $(i=2,3,...,N)$, each with the form 
\begin{equation}
U\ket{\phi_i}=a\ket{11}+\sqrt{1-a^2}(b_{i2}\ket{12}+b_{i3}\ket{13}+...+b_{id}\ket{d_1d_{r_{1}}}),
\end{equation}

we have 
\begin{equation}\label{ap3eq2}
a^{2}\cdot T_{i}^{(1)}=(1-a^{2}) B_{i}^{(1)}\cdot S_{i}^{(1)},
\end{equation}
where $S_{i}^{(1)}=\sum_{j=2}^{d_{r_1}}|b_{ij}|^{2}$, $T_{i}^{(1)}=\sum_{n=1}^{d_1-1}\sum_{j=2}^{d_{r_1}}|b_{i(n\cdot d_{r_1}+j)}|^{2}$, $B_{i}^{(1)}=\sum_{n=1}^{d_1-1} |b_{i(n\cdot d_{r_1}+1)}|^2$.
Since every row and column of $U_{s}$ is normalized, we have
$S_{i}^{(1)}+T_{i}^{(1)}+B_{i}^{(1)}=1$ $(i>1)$, therefore
\begin{equation}\label{ap3eq3}
S_{i}^{(1)}=\frac{a^{2}(1-B_{i}^{(1)})}{B_{i}^{(1)}+a^2(1-B_{i}^{(1)})}. 
\end{equation}
Summing up Eq.~\eqref{ap3eq3} with respect to $i$, we have
\begin{equation}\label{ap3eq4}
S^{(1)}=\sum_{i=2}^{N} S_{i}^{(1)}=\sum_{i=2}^{N}\frac{a^{2}(1-B_{i}^{(1)})}{B_{i}^{(1)}+a^2(1-B_{i}^{(1)})}. 
\end{equation}
Remember that we defined that $B^{(1)}=\sum_{i=2}^{N} B_{i}^{(1)}$. Now using the unitarity of $U_{s}$, we know that $B^{(1)}\leq d_1-1$. Substituting one of the variables $B_{i}^{(1)}$ in \eqref{ap3eq4}, say $B_{N}^{(1)}$, with $B_{N}^{(1)}=B^{(1)}-\sum_{i=2}^{N-1}B_{i}^{(1)}$ and taking the derivatives of $S^{(1)}$ with respect to every other $B_{i}^{(1)}$, with a simple calculation we can see that the lower bound of $S^{(1)}$ is attained if and only if we let $B^{(1)}= d_1-1$ and all $B_{i}^{(1)}$ equal, namely, $B_{i}^{(1)}=\frac{d_{1}-1}{N-1}$ for every $i$. So we have
\begin{equation}\label{ap3eq5}
S^{(1)}\geq\frac{(N-1)(N-d_1)a^{2}}{(d_1-1)(1-a^2)+(N-1)a^2},
\end{equation}
from which we get,
\begin{equation}\label{apeq7}
a\leq\sqrt{\frac{(d_1-1)S^{(1)}}{[N-(S^{(1)}+1)](N-d_1)}}.
\end{equation}
Using the relation $N=I+1$, we obtain the result
\begin{equation}
     a\leq \sqrt{\frac{(d_1-1)S^{(1)}}{[I-(d_1-1)](I-S^{(1)})}}.    
\end{equation}
\qed

Now we give more details for the derivation of Ineq.~\eqref{mteq6}, which takes the form
\begin{equation}
     a\leq \sqrt{\frac{(d_2-1)S^{(2)}}{[S^{(1)}-(d_{2}-1)](S^{(1)}-S^{(2)})}}. 
\end{equation}
The derivation is similar to Ineq.~\eqref{mteq3}. First we have
\begin{equation}\label{ap3eq8}
a^{2}\cdot T_{i}^{(2)}=(1-a^{2}) B_{i}^{(2)}\cdot S_{i}^{(2)},
\end{equation}
where $S_{i}^{(2)}=\sum_{j=2}^{d_{r_2}}|b_{ij}|^{2}$, $T_{i}^{(2)}=\sum_{n=1}^{d_2-1}\sum_{j=2}^{d_{r_2}}|b_{i(n\cdot d_{r_2}+j)}|^{2}$, $B_{i}^{(2)}=\sum_{n=1}^{d_2-1} |b_{i(n\cdot d_{r_2}+1)}|^2$. Notice that the corresponding areas of $S_{i}^{(2)}, T_{i}^{(2)}, B_{i}^{(2)}$ belong to the first $d_{r_1}$ columns of the matrix $M_{N}$, and the three terms $S_{i}^{(2)}, T_{i}^{(2)}, B_{i}^{(2)}$ no longer sum to one. Instead, we have
\begin{equation}\label{ap3eqadd}
S_{i}^{(2)}+T_{i}^{(2)}+B_{i}^{(2)}=L_{i},    
\end{equation}
where $L_{i}\geq0$ and $\sum_{i=2}^{N}L_{i}=S^{(1)}$. Then,
\begin{equation}\label{ap3eq9}
S^{(2)}=\sum_{i=2}^{N} S_{i}^{(2)}=\sum_{i=2}^{N}\frac{a^{2}(L_{i}-B_{i}^{(2)})}{B_{i}^{(2)}+a^2(1-B_{i}^{(2)})}. 
\end{equation}
Using the unitarity of $U_{s}$ we know that $B^{(2)}\equiv\sum_{i=2}^{N}B_{i}^{(2)}\leq d_2-1$. Substituting one of the variables $L_{i}$ in \eqref{ap3eq9}, say $L_{N}$, with $L_{N}=S^{(1)}-\sum_{i=2}^{N-1}L_{i}$, taking the derivatives of $S^{(2)}$ with respect to every other $L_{i}$, with some calculation we can see that the lower bound of $S^{(2)}$ is attained if and only if we let all $B_{i}^{(2)}$ equal, and let $B^{(2)}= d_2-1$, namely, $B_{i}^{(2)}=\frac{d_{2}-1}{N-1}$ for every $i$. Substituting $B_{i}^{(2)}=\frac{d_{2}-1}{N-1}$ into \eqref{ap3eq9}, we have \begin{equation}\label{ap3eq10}
S^{(2)}\geq\frac{(N-1)(S^{(1)}-d_2+1)a^{2}}{(d_2-1)(1-a^2)+(N-1)a^2},
\end{equation}
which is analogous to \eqref{ap3eq5}. With some calculation we have
\begin{equation}\label{ap3eq11}
a^2\leq\frac{(d_{2}-1)S^{(2)}}{NP+d_{2}S^{(2)}-S^{(1)}+d_2-1},
\end{equation}
where $P=S^{(1)}-S^{(2)}-(d_2-1)$. We can see $P\geq 0$ by summing up \eqref{ap3eqadd} with respect to $i$. Also, the denominator of  \eqref{ap3eq11} is always positive as long as the integer that P multiplied by is not smaller than $1$, and the right hand side of \eqref{ap3eq11} increases when the integer decreases. Since $N\geq S^{(1)}+1$, by substituting the $N$ with $S^{(1)}+1$ in \eqref{ap3eq11}, we finally obtain Ineq.~\eqref{mteq6} with some simple calculation.

\end{document}